\begin{document}
\bibliographystyle{spmpsci}  

\title{Elementary Integral Series for Heun Functions}
\subtitle{With an Application to Black-Hole Perturbation Theory}

\author{Pierre-Louis Giscard        \and
        Aditya Tamar 
}

\institute{Pierre-Louis Giscard \at
              Univ. Littoral C\^ote d’Opale, UR 2597, LMPA, Laboratoire de Mathématiques Pures et Appliqu\'ees Joseph Liouville, F-62100 Calais, France. \\
              \email{giscard@univ-littoral.fr}          
           \and
           Aditya Tamar \at
              Independent Researcher, Delhi, India. \\
              \email{adityatamar@gmail.com}  
}

\date{Received: date / Accepted: date}

\maketitle

\begin{abstract}
Heun differential equations are the most general second order Fuchsian equations with four regular singularities. 
An explicit integral series representation of Heun functions involving only elementary integrands has hitherto been unknown and noted as an important open problem in a recent review. 
We provide explicit integral representations of the solutions of all equations of the Heun class: general, confluent, bi-confluent, doubly-confluent and triconfluent, with integrals involving only rational functions and exponential integrands. All the series are illustrated with concrete examples of use. These results stem from the technique of path-sums, which we use to evaluate the path-ordered exponential of a variable matrix chosen specifically to yield Heun functions. 
We demonstrate the utility of the integral series by providing the first representation of the solution to the Teukolsky radial equation governing the metric perturbations of rotating black holes that is convergent everywhere from the black hole horizon up to spatial infinity. 
\keywords{Heun Equations \and Integral Representation \and Path Sums \and Volterra equation \and Neumann Series \and Teukolsky Equation} 
 \PACS{02.30.Hq \and 02.30.Rz \and 04.70.Bw \and 04.70.-s}
\end{abstract}

\section{Introduction}
\label{intro}
The study of Heun equations has generated significant interest in both mathematics and physics lately. From a mathematical standpoint, recent results have uncovered relation between Heun equations other equations of paramount importance for physics. 
For example, it was found by means of antiquantisation procedures \cite{Slav1} and monodromy preserving transformations \cite{Takem1} that the Heun equations share a bijective relationship with Painlev{\'e} equations \cite{Slav1,Slav2,Slav4}. This permitted in-depth studies on the integral symmetry properties of equations of the Heun class \cite{Slav3} and to determine generating polynomial solutions of the Heun equation by formulating a Riemann-Hilbert problem for the Heun function \cite{RieHibH}. The reduction of certain Heun equations under non-trivial substitutions to hypergeometric equations has also been possible by means of pull-back transformations based on Belyi coverings \cite{BelyiHeun} and polynomial transformations \cite{Maier1,Maier2}.

In contrast, in spite of the increasing use of Heun functions in physics (in quantum optics \cite{Xie2010,Moham}, condensed matter physics \cite{Crampe,Dorey}, quantum computing \cite{QuantComp}, two-state problems \cite{QTS1,QTS2} and more \cite{Hort1}), few studies \cite{Hort1,Hort2} have specifically focused on determining their properties most relevant to physical applications. For example, the lack of integral expansions of these functions involving only elementary integrands has been clearly identified as a major obstacle when extracting physical meaning from the mathematical treatment of black holes quasinormal modes \cite{Hort1,Hort2}, yet remains unaddressed in the mathematical literature.
The present works tackles this issue by determining a novel integral representation of the Heun equations involving elementary functions that is tailored to physical applications. In particular, we demonstrate the applicability of the novel integral representation to the Teukolsky equation \cite{TeukEqn} that governs the metric perturbations of rotating black holes and further explore which physical observables pertinent to black hole perturbation theory can be obtained from the integral form. The present progress in integral representation is enabled by the method of path-sum \cite{Giscard2015}, which generates the linear Volterra integral equation of the second kind satisfied by any function involved a system of coupled linear differential equations with variable coefficients.

This paper is organised as follows. In Section \ref{sec:2} we give the minimal necessary background on Heun equations. This section concludes in \S\ref{sec:IntRep} with a review of existing integral representations of Heun functions and their major drawback as noted in the recent mathematical-physics literature. 
 Section \ref{sec:ExplicitRes} is a self-contained presentation of the novel, elementary integral representations of all functions of Heun class, illustrated with concrete examples. This section contains none of the proofs, all of which are deferred to Appendix~\ref{AppendixProofs}. Then, in Section~\ref{sec:black hole} we give the elementary integral series representation of the solution to the Teukolsky radial equation. This representation is the first one to be convergent from the black hole horizon up to spatial infinity. This stands in contrast to the state-of-the-art MST formalism \cite{Mano1997}, that uses \textit{two} hypergeometric series (one convergent at the horizon and the other at infinity) that must then be matched after an analytic continuation procedure. This last step requires the introduction of an auxiliary parameter lacking physical correspondence, at the very least obscuring the physical picture. The convergence of the integral series over the entire domain from the black hole horizon up to spatial infinity therefore alleviates the need for such parameters lacking physical correspondence when calculating solutions of the Teukolsky radial equation. These solutions are of primary importance for computing quantities of physical interest such as gravitational wave fluxes \cite{Fujita2004} and quasinormal modes \cite{Zhang2013}.
 We conclude in \S\ref{sec:conclusion} with a brief discussion of the novel integral series and future prospects of the method of path-sum from which they stem for solving the coupled system of Teukolsky angular and radial equations.

\section{Heun Differential Equations}
\label{sec:2}
\subsection{Mathematical Context}
The most general linear, homogenous, second order differential equation with polynomial coefficients is given by the Fuchsian equation \cite{SpecFunc} which has the following form
\begin{equation*}
P(z)\frac{d^2y(z)}{dz^2} + Q(z)\frac{dy(z)}{dz} + R(z)y(z) = 0,~ z \in \mathbb{CP}^1,
\end{equation*} 
where $\mathbb{CP}^1$ is the Riemann sphere. In the above equation, if the function $K_{QP} = Q(z)/P(z)$ has a pole of at most first order and $K_{RP} = R(z)/P(z)$ has a pole of at most second order at some singularity $z=z_0$, then $z_0$ is called a \textit{Fuchsian} singularity, otherwise it is an \textit{irregular} singularity. 
The above equation is a \textit{Fuchsian equation} if all its singularities are Fuchsian singularities. Now, any Fuchsian equation with exactly four singular points can be mapped onto a Heun equation \cite{HeunOrig} by transformation in dependent or independent variables. These transformations are called s-homotopic and M{\"o}bius transformations respectively. The Heun equation is a straightforward generalisation of the hypergeometric equation, a Fuchsian equation with exactly three singular points \cite{SpecFunc}.

\subsection{General Heun Equation}
As mentioned in the Introduction, the Heun differential equation is the most general Fuchsian equation with four regular singularities. The canonical form of the equation, also known as the General Heun Equation (GHE)  is given by the following equation and conditions: 
\begin{equation}
\frac{d^2y(z)}{dz^2} + \bigg[ \frac{\gamma}{z} + \frac{\delta}{z - 1} + \frac{\epsilon}{z - t} \bigg]\frac{dy(z)}{dz} + \frac{\alpha\beta z - q}{z(z-1)(z-t)}y(z) = 0, \label{eq:Heun}
\end{equation}
where $q \in \mathbb{C}$ is called the \textit{accessory parameter}. The corresponding Riemann-P symbol is as follows: 
\begin{equation*}
\begin{pmatrix}
0 & 1 & a & \infty \\
0 & 0 & 0 & \alpha&;z \\
1-\gamma & 1-\delta & 1 - \epsilon & \beta
\end{pmatrix}
\end{equation*}
where the parameters satisfy the Fuch's condition:
\begin{equation*}
1 + \alpha + \beta = \gamma + \delta + \epsilon
\end{equation*}
The GHE has four singular points at $z = 0,1,t,\infty$. Concerning its solutions, Maier, completing a task initiated by Heun \cite{Heun1888} himself has shown that solutions of the GHE have Coxeter group $D_4$ as their automorphism group \cite{Maier192}. This means that 192 solutions can be generated using the symmetries of $D_4$, much more than the 24 solutions of the Gauss Hypergeometric equation determined by Kummer \cite{Whittaker}. We refer the reader to \cite{Maier192} for the complete list of solutions and their relations as well as to \cite{SpecFunc} for a further discussion of their properties. 

For specific parameter values the Heun equation reduces to other well-known equations of importance: e.g. setting $\epsilon = 0, \gamma = \delta = 1/2$ yields the Mathieu equation, which has found widespread applicability in the theoretical and experimental study of vibration phenomenon \cite{Mathieu,Mathieu5}, electromagnetic scattering from elliptic waveguides \cite{Mathieu1,Mathieu2,Mathieu3}, ion traps in mass spectrometry \cite{Mathieu4}, stability of floating ships \cite{Mathieu6}. Furthermore, the confluent form of the Heun equation has found wide ranging applications in quantum particle confinement and interaction potentials \cite{ConfPot,IntPot} and in the Stark effect \cite{Slav5,SpecFunc}. 

\subsection{Confluent Heun Equations}
\label{sec:2.1}
The GHE contains 4 regular singularities. If we apply a \textit{confluence} procedure to two of its singularities such that we get an irregular singularity, we call the resultant equation a confluent Heun equation (CHE). The CHE contains at least one irregular singular point besides the regular singular points. We can construct local solutions in the vicinity of this irregular singular points by the means of (generally divergent) Thom{\'e} series \cite{SpecFunc}.  The number of parameters in the CHE are reduced by one. Thus by applying the confluence procedure laid out in \cite{SpecFunc} to the singularities at $z = t$ and $z = \infty$ in equation 2, we get the CHE:
\begin{equation}\label{eq:HeunConfluent}
\frac{d^2y(z)}{dz^2} + \bigg[ \frac{\gamma}{z} + \frac{\delta}{z - 1} + \epsilon \bigg]\frac{dy(z)}{dz} + \frac{\alpha z - q}{z(z-1)}y(z) = 0.
\end{equation}
By continuing application of the confluence procedure, we obtain the bi-confluent Heun equation
\begin{equation}\label{eq:HeunBiConfluent}
\frac{d^2y(z)}{dz^2} + \bigg[ \frac{\gamma}{z} + \delta + \epsilon z \bigg]\frac{dy(z)}{dz} + \frac{\alpha z - q}{z}y(z) = 0,
\end{equation}
and related doubly-confluent Heun equation
\begin{equation}\label{eq:HeunDoublyConfluent}
\frac{d^2y(z)}{dz^2} + \bigg[ \frac{\delta}{z^2}+\frac{\gamma}{z}  + 1 \bigg]\frac{dy(z)}{dz} + \frac{\alpha z - q}{z^2}y(z) = 0,
\end{equation}
as well as the triconfluent Heun equation 
\begin{equation}\label{eq:HeunTriConfluent}
\frac{d^2y(z)}{dz^2} + \bigg[ \gamma + \delta z + \epsilon z^2 \bigg]\frac{dy(z)}{dz} + (\alpha z - q)y(z) = 0.
\end{equation}
We refer the reader to \cite{SpecFunc} for further general informations on these functions. 

\subsection{Integral representations of Heun functions}
\label{sec:IntRep}
Erd{\'e}lyi was the first to give an integral equation relating the values taken at two points by a general Heun function  \cite{Erdelyi}. His equation, a Fredholm integral equation, involves an hypergeometric kernel and can be used to obtain a series representation of Heun functions as sums of hypergeometric functions with coefficients determined via recurrence relations. Applications of this result in the special cases of Mathieu and Lam{\'e} equations were discussed by Sleeman \cite{Sleeman}. Naturally, since Erd{\'e}lyi's breakthrough many mathematical works on Heun equations were concerned with integral transformations involving Heun functions. In particular, based on the work of Carlitz\cite{Carlitz}, Valent found an integral transform for the Heun equation in terms of Jacobi polynomials \cite{Valent}; Ishkhanyan gave expansions of the confluent Heun functions involving incomplete beta functions \cite{Ishkhanyan_2005}; El Jaick and coworkers \cite{El_Jaick_2011} provided novel transformations and classified expansions for Heun functions involving hypergeometric kernels; and Takemura found an elliptic transformation relating Heun's functions for different parameters based on the Weierstrass sigma function \cite{Takem2}. This brief list of contributions is far from exhaustive,  we refer to the recent review \cite{Hort1} for more details.

The common feature of all of these integral transforms is that they contain higher transcendental functions which makes them physically opaque and of limited use for practical calculations. In addition, the resulting series representations for the Heun functions have insufficient radiuses of convergence \cite{Cook2016} causing difficulties for black hole perturbation theory (see Section~\ref{sec:5}).
These issues were noted in the recent review \cite{Hort1} 
on Heun's functions, the current state of research on this being described as follows :\\[-1.1em]

\textit{``No example has been given of a solution of Heun's equation expressed in the form of a definite integral or contour integral involving only functions which are, in some sense, simpler.[...] This statement does not exclude the possibility of having an infinite series of integrals with `simpler' integrands''.}\\[-1.1em]

In this work, we constructively prove the existence of such a representation for all types of Heun's functions and for all parameters, in the form of infinite series of integrals whose integrands involve only rational functions and exponentials of polynomials. Furthermore, we show that the series converges everywhere except at the singular points of the Heun function. We show that any Heun function, general or (bi-, doubly-, tri-)confluent, is a sum of exactly two functions each of which satisfy a linear Volterra equation of the second kind with explicitely identified elementary kernels. In particular, any Heun function $H(z)$ itself satisfies a linear integral Volterra equation of the second kind with such an elementary kernel if either there is at least one non-singular point $z_0\in\mathbb{R}$ where $H(z_0)=H'(z_0)$ or there is a point where $H(z_0)=0$.

\section{Elementary integral series for all types of  Heun functions}\label{sec:ExplicitRes}
Owing to the emphasis of the present work on  concrete results and a physical application, all the technical mathematical proofs are deferred to Appendix~\ref{AppendixProofs}.
\subsection{Notation}
The $\ast$ notation is useful to denote iterated integrals. Let $K(z,z_0)$ be a function of two variables that is continuous over $]z_0,z[$. We denote $K(z,z_0)=K^{\ast1}(z,z_0)$ and, for any integer $n>1$, 
$$
K^{\ast n}(z,z_0)=\int_{z_0}^{z} K^{\ast (n-1)}(z,\zeta_1)K(\zeta_1,z_0)d\zeta.
$$
In other terms $K^{\ast n}$ is the Volterra composition \cite{Volterra1924} of $K$ with itself $n$-times. The only type of integral series that is required to present all results of this section is the following
\begin{align*}
G(z,z_0)&:= \sum_{n=1}^\infty K^{\ast n}(z,z_0),\\
&=K(z,z_0)+\int_{z_0}^z K(z,\zeta_1)K(\zeta_1,z_0)d\zeta_1\nonumber\\
&\hspace{5mm}+\int_{z_0}^z \int_{\zeta_1}^z K(z,\zeta_2)K(\zeta_2,\zeta_1)K(\zeta_1,z_0)d\zeta_2d\zeta_1\,+\cdots, \nonumber
\end{align*}
see also Eq.~(\ref{Gexplicitform}). In the appendix, we show that once $z_0$ is fixed, the above series converges over any subinterval of $\mathbb{R}$ which does not contain a singularity of $K(z,z_0)$. 
A bound on the convergence speed of the series is also provided. 

The function $G(z,z_0)$ defined above, is solution to the linear Volterra integral equation of the second kind
\begin{equation}\label{VolterraG}
G(z,z_0)=K(z,z_0)+\int_{z_0}^z K(z,\zeta)G(\zeta,z_0)d\zeta,
\end{equation}
or, in $\ast$ notation, $G=K+K\ast G$. Thus, the function $G$ can either be evaluated from the integral series or by solving the above Volterra equation.

\subsection{Results}
We emphasize that all results stated remain valid for complex parameter values. This is crucial notably when forming solutions of the Teukolsky equation in the study of quasinormal modes, for which the frequency parameter takes complex values (see Eq.~\ref{eq:18}).

\begin{corollary}[General Heun Equation]\label{GenHeunCorrolary}
Let $H_G(z)$ be solution of the General Heun Equation,
 \begin{equation*}
\frac{d^2H_G(z)}{dz^2} + \bigg[ \frac{\gamma}{z} + \frac{\delta}{z - 1} + \frac{\epsilon}{z - t} \bigg]\frac{dH_G(z)}{dz} + \frac{\alpha\beta z - q}{z(z-1)(z-t)}H_G(z) = 0,
\end{equation*}
with initial conditions $H_G(z_0)=H_0$ and $\dot{H}_G(z_0)=H'_0$, assuming that $z_0\in\mathbb{R}$ is not a singular point of $H_G$. Denote $I$ the largest real interval that contains $z_0$ and does not contain any singular point of $H_G$. Then, for any $z\in I$,  
\begin{align*}
  H_G(z)&=H_0+H_0\int_{z_0}^z \!G_{1}(\zeta,z_0)d\zeta
+(H'_0-H_0)\left(\!e^{z-z_0}-1+\!\int_{z_0}^z\!(e^{z-\zeta}-1)G_{2}(\zeta,z_0)d\zeta\right),
  \end{align*}
where  $G_{i}=\sum_{n=1}^\infty K_{i}^{\ast n}$ and  
   \begin{align*}
K_{1}(z,z_0)&=\\
&\hspace{-15mm}1+ e^{-z}\int_{z_0}^z \Big\{\frac{\zeta_1^{\gamma } (\zeta_1-1)^{\delta }(t-\zeta_1)^{\epsilon }}{z^{\gamma }
   (z-1)^{\delta } (t-z)^{\epsilon }}e^{\zeta_1} \left(\frac{q-\alpha  \beta 
   \zeta _1}{\left(\zeta
   _1-1\right) \zeta _1
   \left(\zeta
   _1-t\right)}-\frac{\epsilon
   }{t-\zeta _1}-\frac{\gamma }{\zeta
   _1}-\frac{\delta }{\zeta
   _1-1}-1\right)\Big\}d\zeta_1,\\
K_{2}(z,z_0)&=\left(\frac{q-\alpha  \beta  z}{(z-1) z
   (z-t)}-\frac{\epsilon
   }{t-z}-\frac{\gamma
   }{z}-\frac{\delta }{z-1}-1\right)e^{z-z_0}-\frac{q-\alpha  \beta  z}{(z-1) z
   (z-t)}.
 \end{align*}
\end{corollary}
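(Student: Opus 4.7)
The plan is to recast the Heun equation as a first-order linear system in two unknowns and evaluate the resulting fundamental matrix via the method of path-sums~\cite{Giscard2015}. Setting $\mathbf{y}(z) = (H_G(z), H_G'(z))^{T}$, one obtains $\mathbf{y}'(z) = M(z)\mathbf{y}(z)$ with
\begin{equation*}
M(z) = \begin{pmatrix} 0 & 1 \\ -Q(z) & -P(z) \end{pmatrix}, \qquad P(z) = \frac{\gamma}{z}+\frac{\delta}{z-1}+\frac{\epsilon}{z-t}, \qquad Q(z) = \frac{\alpha\beta z - q}{z(z-1)(z-t)}.
\end{equation*}
Writing $U(z,z_0)$ for the fundamental matrix, one then has $H_G(z) = U_{11}(z,z_0)H_0 + U_{12}(z,z_0)H'_0$, so it suffices to express the first row of $U$ as an integral series.

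Before invoking path-sums I would perform a diagonal gauge shift $M = (M - I) + I$, i.e.\ add and subtract a unit self-loop at both vertices of the directed two-vertex graph underlying $M$. The $+I$ part integrates into the elementary propagator $e^{z-z_0}$ that dresses each vertex between successive transitions, while the $-I$ is absorbed into the effective self-loop weights. This reorganisation is precisely what accounts for the global factors $e^{z-z_0}$, $e^{z-\zeta}$, $e^{-z}e^{\zeta_1}$ that appear in the statement, as well as for the constant $-1$ correction inside $K_1$ and the combination $e^{z-z_0}-1$ in the outer formula for $H_G$.

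The third step is to apply the path-sum identities of~\cite{Giscard2015} to this decorated two-vertex graph: for any such graph the diagonal entries of $U$ are $\ast$-resolvents of effective self-loops of the form
\begin{equation*}
K_i(z,z_0) = L_i(z,z_0) + \bigl(B_{ij}\ast (1 - L_j)^{\ast(-1)}\ast B_{ji}\bigr)(z,z_0),
\end{equation*}
where $L_i$ is the bare self-loop at vertex $i$ and $B_{ij}$ the directed edge from $i$ to $j$, while off-diagonal entries factor analogously through a single edge dressed by the resolvents at its endpoints. After the gauge shift of the previous step these effective loops should evaluate to exactly the $K_1$, $K_2$ displayed in the statement, so that $G_i = \sum_{n\ge 1} K_i^{\ast n}$ becomes the $\ast$-resolvent solving $G_i = K_i + K_i\ast G_i$ as in Eq.~(\ref{VolterraG}). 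Substituting the resulting expressions for $U_{11}$ and $U_{12}$ into $H_G = H_0 U_{11} + H'_0 U_{12}$ and regrouping the result around $H_0$ and $H'_0 - H_0$ yields the claimed integral representation.

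Finally, the initial conditions $H_G(z_0)=H_0$ and $H_G'(z_0)=H'_0$ hold by inspection since every iterated integral vanishes at $z=z_0$ and $U(z_0,z_0)=I$, while convergence on $I$ follows from the general bound on $\sum K^{\ast n}$ established in the appendix. The main obstacle I anticipate is entirely computational: pinning down the gauge shift so that the two kernels land in the prescribed asymmetric form, with the global factor $e^{-z}$ in $K_1$ and $e^{z-z_0}$ in $K_2$, and then tracking the cancellations between the added $\pm 1$ contributions and the subtracted algebraic fractions $-\epsilon/(t-z)-\gamma/z-\delta/(z-1)-1$ inside the kernels. Once this bookkeeping is fixed, the two-vertex path-sum identity closes the argument mechanically.
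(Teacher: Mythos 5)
Your overall architecture matches the paper's: recast the second-order equation as a two-dimensional first-order system, evaluate the path-ordered exponential by the two-vertex path-sum identity, recognise the resulting objects as $\ast$-resolvents of explicit kernels, and read off the Neumann series $G_i=\sum_{n\geq1}K_i^{\ast n}$ together with the convergence bound. The gap sits exactly where you anticipate trouble: the additive decomposition $M=(M-I)+I$ of the companion matrix cannot produce the stated kernels. Writing the equation as $y''-B_1y'-B_2y=0$ (so $B_1=-P$, $B_2=-Q$), the combination $B_1+B_2-1$ appears verbatim as the algebraic factor inside both $K_1$ and $K_2$ of the Corollary; it must therefore be an \emph{entry} of the matrix whose ordered exponential is computed, namely the $(2,1)$ edge weight of the two-vertex graph. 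Shifting the companion matrix by a multiple of the identity only alters its diagonal (and, used as an interaction picture, merely extracts a global prefactor $e^{z-z_0}$), so the graph you propose keeps $B_2$ as its only inter-vertex weight and its effective self-loops can never assemble into $B_1+B_2-1$. The bookkeeping you hope will "close mechanically" does not close.

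What the paper actually does is a change of dependent variables, $\psi_1=y$, $\psi_2=y'-y$, i.e.\ conjugation of the companion matrix by $T=\begin{pmatrix}1&0\\-1&1\end{pmatrix}$, yielding
\begin{equation*}
\mathsf{M}=\begin{pmatrix}1&1\\ B_1+B_2-1 & B_1-1\end{pmatrix}.
\end{equation*}
This single move delivers everything you were trying to extract from the gauge shift at once: $\mathsf{M}_{11}=1$ and $\mathsf{M}_{22}=B_1-1$ generate the $e^{z-\zeta}$ and $e^{\zeta-z}e^{\int_\zeta^z B_1}$ dressings via $(1_\ast-\mathsf{M}_{22})^{\ast-1}=1_\ast+\mathsf{M}_{22}e^{1\ast\mathsf{M}_{22}}$; $\mathsf{M}_{21}=B_1+B_2-1$ is the factor inside both kernels; and the initial vector becomes $(H_0,\,H_0'-H_0)^{T}$, so the grouping around $H_0$ and $H_0'-H_0$ is forced from the outset rather than recovered by regrouping $H_0U_{11}+H_0'U_{12}$ afterwards. (Your regrouping is formally consistent, since $\mathsf{U}_{11}=U_{11}+U_{12}$ and $\mathsf{U}_{12}=U_{12}$ under this conjugation, but the path-sum applied to the unconjugated companion matrix produces different, merely equivalent, kernels.) With $\mathsf{M}$ in this form, your third step and the convergence argument go through exactly as you describe, and specialising $B_1,B_2$ to the general Heun coefficients gives the Corollary.
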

~\\[-1em]

\begin{example}[Elementary integral series converging to a general Heun function]
In order to illustrate concretely the above corollary, consider the following General Heun equation (here with arbitrary parameters),
 \begin{equation}\label{ex:Hg1}
\frac{d^2H_G(z)}{dz^2} + \bigg[ \frac{2}{z} + \frac{7}{z - 1} + \frac{(-1)}{z - 4} \bigg]\frac{dH_G(z)}{dz} + \frac{(3/2)z - 1}{z(z-1)(z-4)}H_G(z) = 0,
\end{equation}
with initial conditions $H_G(6)=H'_T(6)=1$. Here, the largest real interval containing $6$ and none of the singular points $0$, $1$ and $t=4$ is $I=]4,+\infty[$.
Thus Corollary~\ref{GenHeunCorrolary} indicates that for any $z\in]4,+\infty[$, 
\begin{align*}
H_G(z) &= 1+\int_{z_0}^z G_{1}(\zeta,z_0)d\zeta,\\
&=1+\sum_{n=1}^\infty \int_{6}^z K_{1}^{\ast n}(\zeta,6)d\zeta,\\
&=1+\int_{6}^z K_{1}(\zeta,6)d\zeta+\int_{6}^z\int_{6}^{\zeta} K_{1}(\zeta,\zeta_1)K_{1}(\zeta_1,6)d\zeta_1d\zeta+\cdots,
\end{align*}
with the kernel $K_1$ given by
\begin{align*}
K_{1}(z,z_0)
&=1- e^{-z}\frac{(z-4)}{z^2(z-1)^7}\int_{z_0}^z\!e^{\zeta_1}\frac{\zeta _1\left(\zeta _1-1\right)^6}{2 \left(\zeta
   _1-4\right)^2} 
   \left(2 \zeta _1^3+10 \zeta _1^2-67 \zeta
   _1+14\right)d\zeta_1.
\end{align*}
In Fig.~(\ref{fig:Ex1}), we show a purely numerical evaluation of $H_G(z)$ together with analytical estimates based on the first few orders of the above series, i.e. we give $H^{(m)}_G(z) := 1+\sum_{n=1}^m \int_{6}^z K_{1}^{\ast n}(\zeta,6)d\zeta$, with $m=1,2,3$ and $m=6$. This exhibits the convergence of the Neumann series representation of the path-sum formulation of a general Heun function, as predicted by the theory. 
\begin{figure}[!h]
\centering
\includegraphics[width=1\textwidth]{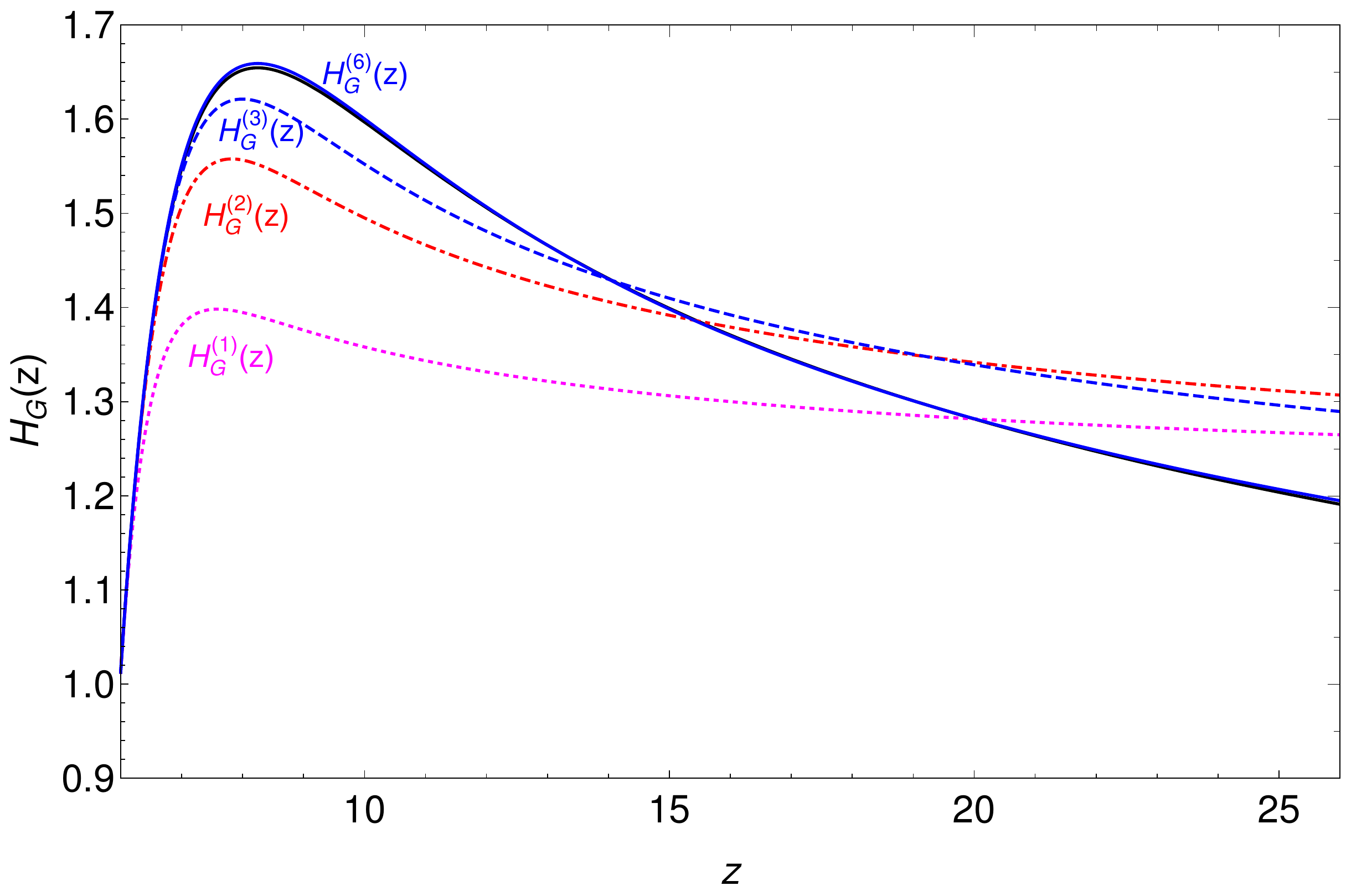}
\caption{\label{fig:Ex1}\textbf{Convergence to a general Heun function with elementary integrals.} Numerical evaluation of the general Heun function solution of Eq.~(\ref{ex:Hg1}) (solid black line), together with the first integral approximands of it: $H^{(1)}_G(z)$ (dotted magenta line), $H^{(2)}_G(z)$ (dot-dashed red line), $H^{(3)}_G(z)$ (dashed blue line) and $H^{(6)}_G(z)$ (solid blue line, very close to the numerical solution). For orders $m\geq 9$, we reach the numerical solution to within machine precision. Note that the integral series given here is convergent on $z\in ]4,+\infty[$ but we show only the interval $z\in[6,26]$ for illustration purposes.} 
\end{figure}

~\\[-1em]

The results above continue to hold should e.g. $z_0=3$, in which case $I=]1,4[$; $z_0=1/2$ implying $I=]0,1[$; or $z_0=-20$ giving $I=]-\infty,0[$. In other terms, the integral representation given for the General Heun function is valid everywhere on $z\in\mathbb{R}\backslash\{0,1,t=4\}$ but can only be used in an interval $I$ where initial conditions for $H_G$ are available.
\end{example}

\newpage
\begin{corollary}[Confluent Heun Equation]\label{corr:Conf}
Let $H_C(z)$ be solution of the Confluent Heun Equation,
 \begin{equation*}
\frac{d^2H_C(z)}{dz^2} + \bigg[ \frac{\gamma}{z} + \frac{\delta}{z - 1} + \epsilon \bigg]\frac{dH_C(z)}{dz} + \frac{\alpha z - q}{z(z-1)}H_C(z) = 0,
\end{equation*}
with initial conditions $H_C(z_0)=H_0$ and $\dot{H}_C(z_0)=H'_0$, assuming that $z_0\neq0$ and $z_0\neq 1$. If $z_0<0$, let $I=]-\infty,0[$, if $0<z_0<1$ let $I=]0,1[$, and else for $z_0>1$ let $I=]1,+\infty[$. Then, for any $z\in I$,
\begin{align*}
  H_C(z)&=H_0+H_0\int_{z_0}^z \!G_{1}(\zeta,z_0)d\zeta
+(H'_0-H_0)\left(\!e^{z-z_0}-1+\!\int_{z_0}^z\!(e^{z-\zeta}-1)G_{2}(\zeta,z_0)d\zeta\right),
  \end{align*}
where  $G_{i}=\sum_{n=1}^\infty K_{i}^{\ast n}$, $i=1,2$, and  
   \begin{align*}
K_{1}(z,z_0)&=1+ e^{-z}\int_{z_0}^z \Big\{
\frac{e^{\zeta\epsilon}\zeta^{\gamma } \left(\zeta-1\right)^{\delta}}{e^{z \epsilon} z^{\gamma } (z-1)^{\delta }} e^{\zeta}
\left(\frac{q-\alpha  \zeta}{\left(\zeta-1\right) \zeta}-\frac{\gamma }{\zeta}-\frac{\delta }{\zeta-1}-\epsilon -1\right)\Big\}d\zeta,\\
K_{2}(z,z_0)&=\left(\frac{q-\alpha z}{\left(z-1\right) z}-\frac{\gamma }{z}-\frac{\delta }{z-1}-\epsilon -1\right)e^{z-z_0}-\frac{q-\alpha  z}{(z-1) z}.
 \end{align*}
\end{corollary}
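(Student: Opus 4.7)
The plan is to mirror the path-sum derivation used for Corollary~\ref{GenHeunCorrolary}, with minor modifications reflecting that the Confluent Heun Equation has one fewer finite singular point than the General Heun Equation. First, I would rewrite the CHE as a first-order $2\times 2$ linear system $\vec{u}'(z)=M(z)\vec{u}(z)$ with $u_1=H_C$, where $M(z)$ is chosen so that its diagonal entries are $0$ and $1$ rather than the usual companion-matrix entries. This choice is what produces the $e^{z-z_0}$ factors in the final formula: the self-loop of weight $1$ at one vertex of the associated directed graph exponentiates precisely to $e^{z-z_0}$. Matching the shape of $K_1,K_2$ to what is expected from path-sum dictates off-diagonal entries built from the integrating factor $z^{\gamma}(z-1)^{\delta}e^{\epsilon z}$ and the potential $(\alpha z-q)/[z(z-1)]$; a short computation eliminating $u_2$ from the system recovers the CHE for $u_1$ and fixes the remaining constants.

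Next, I would apply path-sum to $\mathcal{P}e^{\int_{z_0}^z M(\zeta)d\zeta}$. On a two-vertex graph path-sum reduces each diagonal entry of the path-ordered exponential to a linear Volterra integral equation of the second kind $G_i=K_i+K_i\ast G_i$, whose kernel $K_i$ encodes a single ``first-return'' excursion through the opposite vertex, dressed by the two self-loops of weights $0$ and $1$. The Neumann iteration of this equation is exactly $G_i=\sum_{n\geq 1}K_i^{\ast n}$, as in Eq.~(\ref{VolterraG}). The two diagonal entries $f_1,f_2$ of $\mathcal{P}e^{\int M}$ furnish linearly independent solutions of the CHE satisfying $f_1(z_0)=f_1'(z_0)=1$ and $f_2(z_0)=0,\,f_2'(z_0)=1$, so the combination fitting the initial conditions $(H_0,H_0')$ is forced to be $H_C=H_0\,f_1+(H_0'-H_0)\,f_2$, which is precisely the stated formula.

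Convergence of $G_i$ on any compact subinterval of $I$ would follow from the general convergence statement quoted just before Eq.~(\ref{VolterraG}) in Section~\ref{sec:ExplicitRes}, since the kernels $K_1,K_2$ are continuous on $I\times I$: the only singular points of the CHE coefficients on the real line are $0$ and $1$, which the definition of $I$ (depending on the position of $z_0$ relative to them) excludes. The main obstacle is the algebraic verification that the path-sum first-return kernel built from this specific $M(z)$ collapses exactly to the stated $K_1,K_2$ once the integrating factors have been combined; this is routine but bookkeeping-heavy, as numerous factors of the integrating factor must cancel in pairs. The decisive design choice is placing the self-loop of weight $1$ on vertex $2$ rather than absorbing that constant into $M_{21}$: a standard companion-matrix formulation would produce path-sum kernels involving the full fundamental solution of the auxiliary equation, forfeiting the \emph{elementary} character of the representation that is the whole point of the result.
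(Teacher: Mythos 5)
Your overall strategy --- rewrite the CHE as a $2\times2$ first-order system, evaluate the path-ordered exponential by path-sum so that the relevant entries are governed by linear Volterra equations of the second kind whose Neumann series give $G_i=\sum_{n\geq1}K_i^{\ast n}$, and fit the initial data through $H_C=H_0f_1+(H_0'-H_0)f_2$ --- is exactly the paper's (Proposition~\ref{MForm}, then Theorems~\ref{GenTheorem} and~\ref{TheoExplicit}, then substitution of the CHE coefficients $B_1,B_2$). The convergence argument you invoke is also the one used in the appendix.

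However, the specific matrix you design is not the one that produces the stated kernels, and with your choice the ``routine but bookkeeping-heavy'' verification would in fact fail rather than merely be tedious. Writing the CHE as $y''-B_1y'-B_2y=0$, the paper takes
\begin{equation*}
\mathsf{M}(z)=\begin{pmatrix}1&1\\ B_1(z)+B_2(z)-1&B_1(z)-1\end{pmatrix},
\end{equation*}
so the weight-$1$ self-loop sits at vertex $1$ (not vertex $2$), the off-diagonal entries are the constant $1$ and the function $B_1+B_2-1$, and the integrating factor $z^{\gamma}(z-1)^{\delta}e^{\epsilon z}$ never appears in $\mathsf{M}$ at all: it emerges only a posteriori as the resolvent $e^{\int_\zeta^z(B_1(\zeta')-1)d\zeta'}$ of the self-loop $\mathsf{M}_{22}=B_1-1$. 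These choices are forced by the target kernels: the additive ``$1+$'' in $K_1$ is $\mathsf{M}_{11}=1$, the factor $(B_1+B_2-1)(\zeta)$ in the integrand of $K_1$ is $\mathsf{M}_{21}(\zeta)$, and $K_2=(B_1+B_2-1)e^{z-z_0}-B_2$ is precisely $\mathsf{M}_{22}+\mathsf{M}_{21}\ast(1_\ast-\mathsf{M}_{11})^{\ast-1}\ast\mathsf{M}_{12}$. With your matrix (diagonal $(0,1)$, off-diagonals carrying the integrating factor and the potential), the first-return kernel at vertex $1$ has no additive constant, its integrand carries $B_2(\zeta)$ rather than $(B_1+B_2-1)(\zeta)$, and it acquires an extra iterated integral from the $e^{z-\zeta}$ dressing of the vertex-$2$ loop; it does not collapse to the stated $K_1$. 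A second, smaller slip: the two solutions with data $f_1(z_0)=f_1'(z_0)=1$ and $f_2(z_0)=0$, $f_2'(z_0)=1$ are the first-row entries $\mathsf{U}_{11}$ and $\mathsf{U}_{12}$, not the diagonal entries ($\mathsf{U}_{22}$ is the second component of the second column of the system and does not solve the CHE); the combination $H_C=H_0\,\mathsf{U}_{11}+(H_0'-H_0)\,\mathsf{U}_{12}$ you wrote is nevertheless the correct one.
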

~\\[-1em]

\begin{example}[Convergence to a Confluent Heun function]
Let us now consider the following Confluent Heun function $H_C(z)$ satisfying
 \begin{equation}\label{ex:HC}
\frac{d^2H_C(z)}{dz^2} + \bigg[ \frac{3}{z} + \frac{(2/3)}{z - 1} + 4 \bigg]\frac{dH_C(z)}{dz} + \frac{5 z - 1}{z(z-1)}H_C(z) = 0
\end{equation}
with initial conditions $H_C(-5)=0$ and $H'_C(-5)=1$. Suppose that we wish to evaluate $H_C$ on the interval $z\in]-\infty,0[$, i.e. on both sides $z<z_0$ and $z>z_0$ of the conditions at $z_0=-5$. 
Then Corollary~\ref{corr:Conf} indicates that, for any $z\in]-\infty,0[$, we have 
\begin{equation*}
H_C(z) = e^{z+5}-1+\int_{-5}^z(e^{z-\zeta}-1)G_{2}(\zeta,-5)d\zeta,
\end{equation*}
with $G_2=\sum_{n=1}^\infty K_2^{\ast n}$ and
\begin{align*}
K_{2}(z,z_0)=\frac{3 (5 z-1)-e^{z-z_0} (3 z+4) (5 z-3)}{3 (z-1) z}.
\end{align*}
We emphasize that these results hold for all $z\in]-\infty,0[$ since this interval is divergence free, more precisely $K_2$ is bounded continuous on any compact subinterval of $]-\infty, 0[$ and the integral series for $G_2$ is thus guaranteed to converge on this entire domain (this is shown in the appendix). Note that when considering $z<z_0$, all integrals remain the same as for $z>z_0$.

In Fig.~(\ref{fig:ExConf}) below, we show a purely numerical evaluation of $H_C(z)$ together with the truncated integral series approximations  
\begin{align}
H_C^{(m)}(z) &:= \int_{-5}^z(e^{z-\zeta}-1)\Big(1+\sum_{n=1}^m K_{2}^{\ast n}(\zeta,-5)d\zeta\Big),\label{PSsol}\\
&=\int_{-5}^z(e^{z-\zeta}-1)\Big(1+K_{2}(\zeta,-5)+\int_{-5}^\zeta K_2(\zeta,\zeta_1)K_2(\zeta_1,-5)d\zeta_1 +\cdots\Big).\nonumber
\end{align}
 Since kernel $K_2$ is singular at $z=0$ just as $H_C$ is, we expect the convergence speed of the integral series to slow down when approaching the singular point, as predicted by the bound of Eq.(\ref{Gbound}) presented in the appendix. This does not preclude analytically obtaining the correct asymptotic behavior for $H_C(z)$ as $z\to 0^-$. Indeed this follows from the behavior of $K_2$ under the same limit. We demonstrate such a procedure in \S\ref{TRESol}.
\begin{figure}[!h]
\centering
\includegraphics[width=1\textwidth]{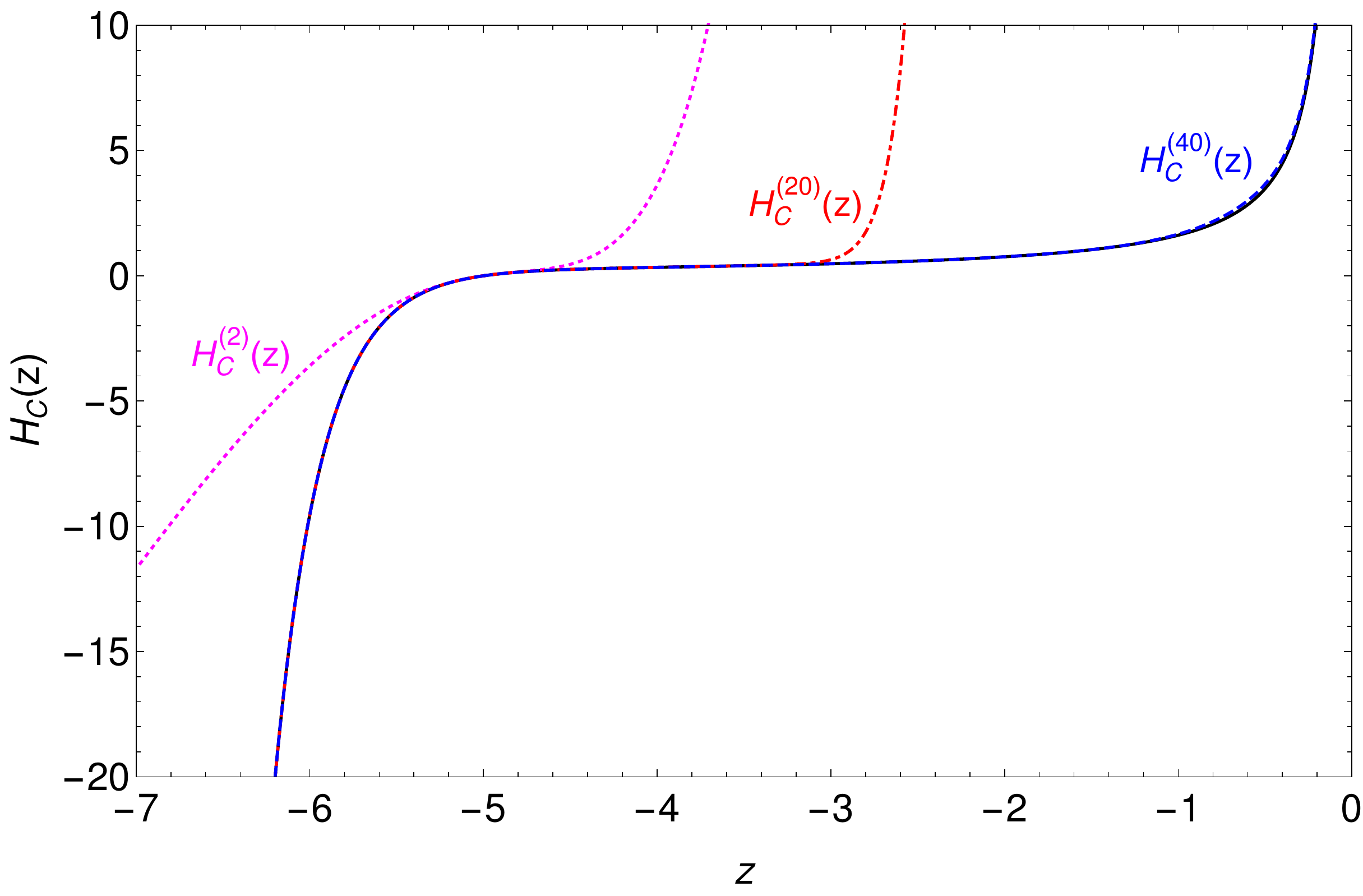}
\caption{\label{fig:ExConf}\textbf{Convergence to a Confluent Heun function with elementary integrals over the interval $]-\infty,0[$.} Numerical solution of the Eq.~(\ref{ex:HC}) (solid black line) with conditions $H_C(-5)=0$, $H_C'(-5)=1$, together with its integral approximands as per Eq.~(\ref{PSsol}), $H^{(2)}_C(z)$ (dotted magenta line), $H^{(20)}_C(z)$ (dot-dashed red line) and $H^{(40)}_C(z)$ (dashed blue line, very close to the numerical solution). Convergence near $z=0$ is slowed down due to $K_{2}$ being singular at $z=0$ just as $H_C$ is. Still, the integral series is convergent over the entire domain  $z\in]-\infty,0[$, a crucial property for perturbative black hole theory that is \emph{unique} to the present approach. Here as in subsequent examples we plot the various functions over smaller intervals for $z$, for illustration purposes.} 
\end{figure}
\end{example}
~\\
\FloatBarrier

\newpage
\begin{corollary}[Biconfluent Heun Equation]\label{corr:BiConf}
Let $H_B(z)$ be solution of the Biconfluent Heun Equation,
 \begin{equation*}
\frac{d^2H_B(z)}{dz^2} + \bigg[ \frac{\gamma}{z} + \delta + \epsilon z \bigg]\frac{dH_B(z)}{dz} + \frac{\alpha z - q}{z}H_B(z) = 0,
\end{equation*}
with initial conditions $H_B(z_0)=H_0$ and $\dot{H}_B(z_0)=H'_0$, assuming that $z_0\neq0$. If $z_0>0$, denote $I=]0,+\infty[$ otherwise let $I=]-\infty,0[$. Then, for any $z\in I$, 
\begin{align*}
  H_B(z)&=H_0+H_0\int_{z_0}^z \!G_{1}(\zeta,z_0)d\zeta
+(H'_0-H_0)\left(\!e^{z-z_0}-1+\!\int_{z_0}^z\!(e^{z-\zeta}-1)G_{2}(\zeta,z_0)d\zeta\right),
  \end{align*}
where  $G_{i}=\sum_{n=1}^\infty K_{i}^{\ast n}$, $i=1,2$, and  
   \begin{align*}
K_{1}(z,z_0)&=1+ e^{-z}\int_{z_0}^z \Big\{\frac{\zeta _1^{\gamma }}{z^{\gamma}} e^{\zeta _1-\frac{1}{2} \left(z-\zeta _1\right) \left(2 \delta +\epsilon  \left(\zeta
   _1+z\right)\right)} \Big(\frac{q-\alpha  \zeta _1}{\zeta _1}-\frac{\gamma }{\zeta _1}-\delta -\zeta _1 \epsilon -1\Big)\Big\}d\zeta_1,\\
K_{2}(z,z_0)&=\Big(\frac{q-\alpha  z}{z}-\frac{\gamma }{z}-\delta -z \epsilon -1\Big)e^{z-z_0}-\frac{q-\alpha  z}{z}.
 \end{align*}
\end{corollary}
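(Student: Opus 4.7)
The proof follows the same path-sum strategy of~\cite{Giscard2015} as used for Corollaries~\ref{GenHeunCorrolary} and~\ref{corr:Conf}, now specialized to the biconfluent coefficients: recast Eq.~(\ref{eq:HeunBiConfluent}) as a two-dimensional first-order linear system, apply the path-sum formalism to the resulting 2-vertex matrix ODE, and identify its Volterra resolvents with $G_1$ and $G_2$. Setting $y_1:=H_B$ and $y_2:=H_B'$, the equation becomes $y'=My$ with
\begin{equation*}
M(z)=\begin{pmatrix} 0 & 1 \\ p(z) & \rho(z) \end{pmatrix},\qquad p(z):=\frac{q-\alpha z}{z},\quad \rho(z):=-\Bigl(\tfrac{\gamma}{z}+\delta+\epsilon z\Bigr).
\end{equation*}

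My plan is first to perform the similarity $\tilde y=Ty$ with $T=\bigl(\begin{smallmatrix}1&0\\-1&1\end{smallmatrix}\bigr)$, producing $\tilde y_1=H_B$, $\tilde y_2=H_B'-H_B$, and $\tilde y'=\tilde M\tilde y$ with
\begin{equation*}
\tilde M(z)=\begin{pmatrix} 1 & 1 \\ p(z)+\rho(z)-1 & \rho(z)-1 \end{pmatrix}.
\end{equation*}
This change of basis is engineered so that vertex~$1$ carries a self-loop of weight exactly $+1$, which is precisely what produces the factors $e^{z-z_0}$ that structure the statement. The transformed initial vector $(H_0,\,H_0'-H_0)^\top$ suggests the linearity decomposition $H_B=A+B$ with $A(z_0)=A'(z_0)=H_0$ and $B(z_0)=0$, $B'(z_0)=H_0'-H_0$; these two pieces correspond respectively to $\tilde U_{11}$ and $\tilde U_{12}$ of the fundamental matrix acting on the two components of the transformed data.

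Next I would apply the path-sum formula to the 2-vertex digraph carrying $\tilde M$, expanding each entry of $\tilde U(z,z_0)$ as a sum over simple walks. The self-loop at vertex~$1$ (weight $+1$) resums to $e^{z-z_0}$, the self-loop at vertex~$2$ (weight $\rho-1$) resums to $\exp\bigl(\int_\zeta^z(\rho-1)d\xi\bigr)=(\zeta/z)^\gamma\,e^{\zeta-z-\delta(z-\zeta)-\epsilon(z^2-\zeta^2)/2}$, and the 1--2--1 cycle resums into a Volterra kernel satisfying a linear integral equation of the second kind. Evaluating $\tilde U_{12}$ then yields the outer structure $e^{z-z_0}-1+\int_{z_0}^z(e^{z-\zeta}-1)G_2(\zeta,z_0)\,d\zeta$, with $K_2=(p+\rho-1)e^{z-z_0}-p$ emerging as the edge weight $1{\to}2$ at $z$ combined with the resummed self-loop at vertex~$1$ and the algebraic return $2{\to}1$. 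The parallel computation for $\tilde U_{11}$ produces $1+\int G_1$, with the inner integral in $K_1$ collecting exactly the self-loop exponential at vertex~$2$ times the edge weight $p+\rho-1$, matching the stated integrand.

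The main obstacle will be the combinatorial bookkeeping needed to verify that the cycle resummation in the expansion of $\tilde U_{11}$ reproduces precisely the nested exponential structure of $K_1$; this, however, is a direct specialization of the computation carried out in the appendix for Corollary~\ref{GenHeunCorrolary}, in which the pole $\epsilon/(z-t)$ disappears in the biconfluent limit and the linear term $\epsilon z$ in $\rho$ integrates into the Gaussian-type factor $e^{-\epsilon(z^2-\zeta_1^2)/2}$, with the rest of the tracking carrying over unchanged. Once the kernels are identified, convergence of $G_1=\sum K_1^{\ast n}$ and $G_2=\sum K_2^{\ast n}$ on $I$ follows from the general bound established in the appendix: both $K_1$ and $K_2$ are continuous and bounded on every compact subinterval of $I$, since $z=0$ is the sole real singularity of the BHE coefficients and $I$ is chosen to exclude it.
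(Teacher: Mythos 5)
Your proposal is correct and follows essentially the same route as the paper: the appendix posits the matrix $\mathsf{M}=\bigl(\begin{smallmatrix}1&1\\B_1+B_2-1&B_1-1\end{smallmatrix}\bigr)$ directly (Proposition~\ref{MForm}, verified by differentiation) rather than deriving it as a similarity transform of the companion matrix, but this is the same object as your $\tilde{\mathsf{M}}$, and the subsequent path-sum evaluation of $\mathsf{U}_{11}$ and $\mathsf{U}_{12}$, the identification of $K_1,K_2$ from the resummed self-loops and the $1\!-\!2\!-\!1$ cycle, and the Neumann-series convergence bound all match Theorems~\ref{GenTheorem} and~\ref{TheoExplicit}. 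Your specialization of the exponential factor $e^{\int_\zeta^z(B_1-1)}$ to the biconfluent coefficients is also the correct one.
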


\begin{example}[Evaluating a Biconfluent Heun function via Volterra equations]
Let us now consider the following Biconfluent Heun function $H_B(z)$ satisfying
 \begin{equation}\label{ex:HB}
 \frac{d^2H_B(z)}{dz^2} + \bigg[ \frac{(1/10)}{z} + 1 + 6 z \bigg]\frac{dH_B(z)}{dz} + \frac{(-1) z - 2}{z}H_B(z) = 0,
\end{equation}
with initial conditions $H_B(2/3)=0$ and $H'_B(2/3)=-4$. Then Corollary~\ref{corr:BiConf} indicates that for $z>0$,
\begin{align}\label{VolterraBiConf}
  H_B(z)&=2+2 \int_{2/3}^z \!\!G_{1}(\zeta,2/3)d\zeta-6\left(\!e^{z-2/3}-1+\!\int_{2/3}^z\!(e^{z-\zeta}-1)G_{2}(\zeta,2/3)\!\right) d\zeta,
  \end{align}
with $G_{i}=\sum_{n=1}^\infty K_{i}^{\ast n}$ for $i=1,2$, and
\begin{align*}
K_{1}(z,z_0)&=1+\int_{z_0}^z\frac{(19-10 \zeta  (6 \zeta +1)) e^{-(z-\zeta ) (3 \zeta +3 z+2)}}{10 \zeta ^{9/10} z^{1/10}}d\zeta,\\
K_{2}(z,z_0)&=\frac{\big(19-10 z (6 z+1)\big) e^{z-z_0}-10 (z+2)}{10 z}.
\end{align*}
Instead of evaluating functions $G_1$ and $G_2$ as the integral series, we may directly solve the linear integral Volterra equations that they satisfy, see Eq.~(\ref{VolterraG}). Such equations are very well behaved and numerically easy to solve, so that we can evaluate $H_B$ thanks to Eq.~(\ref{VolterraBiConf}) with high numerical accuracy.
In Fig.~(\ref{fig:ExBiConf}) below, we show the numerical evaluation of $H_B(z)$ obtained using a standard differential equations numerical solver versus the procedure described above.   
\begin{figure}[!h]
\centering
\includegraphics[width=.92\textwidth]{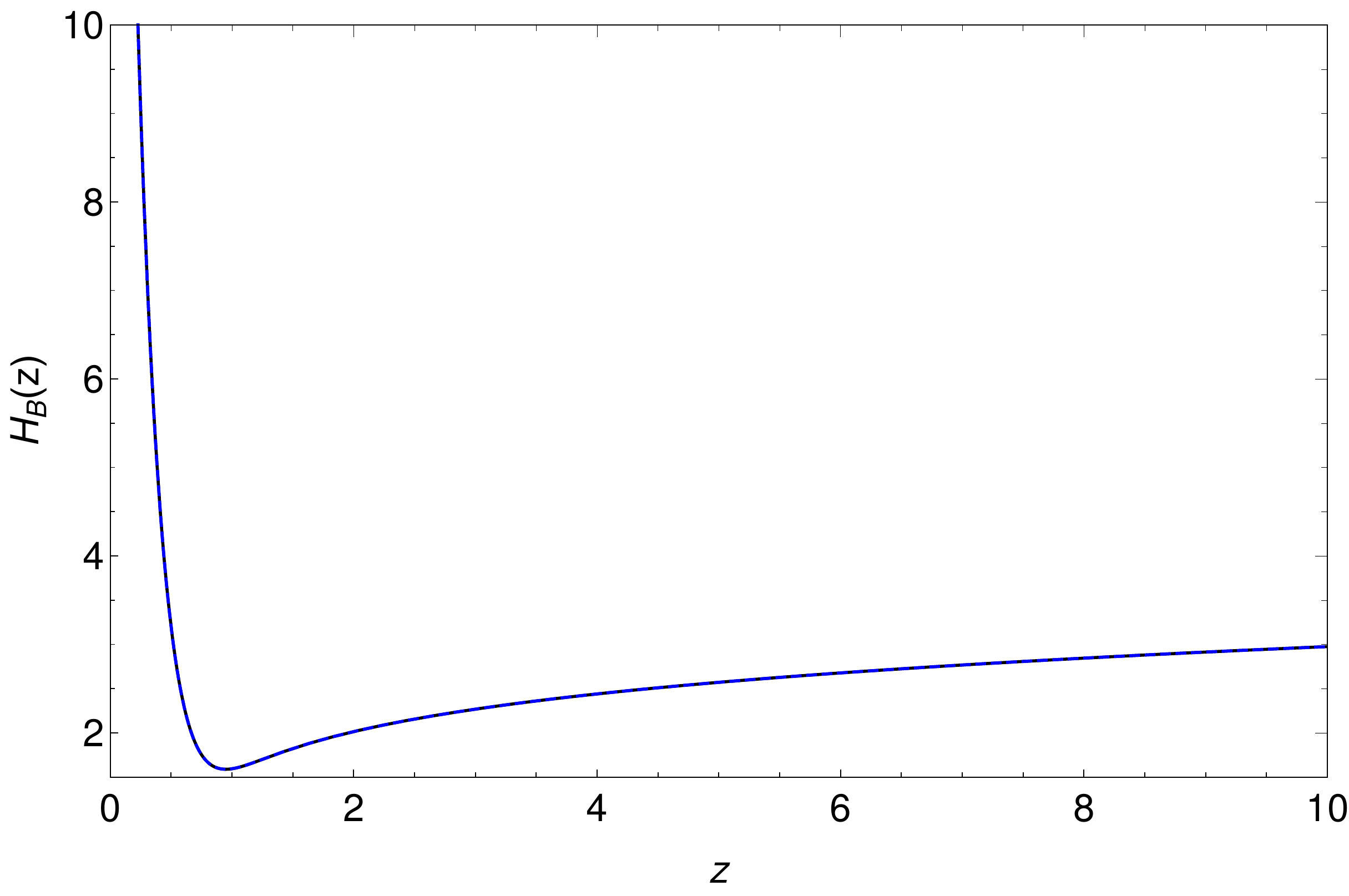}
\caption{\label{fig:ExBiConf}\textbf{Biconfluent Heun function from Volterra equations.} Numerical solution of the Eq.~(\ref{ex:HC}) (solid black line), together with the function predicted by Eq.(\ref{VolterraBiConf}) (dashed blue line). The two are indistinguishable. The integral representation of Eq.~(\ref{VolterraBiConf}) is valid for $z\in]0,+\infty[$, we here show only $z\in]0,10[$ for illustration purposes.} 
\end{figure}
\end{example}
~\\
\FloatBarrier

\begin{corollary}[Doubly-confluent Heun Equation]\label{DoublyConfluentCorr}
Let $H_D(z)$ be solution of the Doubly-confluent Heun Equation,
 \begin{equation*}
 \frac{d^2H_D(z)}{dz^2} + \bigg[ \frac{\delta}{z^2}+\frac{\gamma}{z}  + 1 \bigg]\frac{dH_D(z)}{dz} + \frac{\alpha z - q}{z^2}H_D(z) = 0
\end{equation*}
with initial conditions $H_D(z_0)=H_0$ and $\dot{H}_D(z_0)=H'_0$, assuming that $z_0\neq0$. If $z_0>0$, denote $I=]0,+\infty[$ otherwise let $I=]-\infty,0[$. Then, for any $z\in I$,  
  \begin{align*}
  H_D(z)&=H_0+H_0\int_{z_0}^z \!G_{1}(\zeta,z_0)d\zeta
+(H'_0-H_0)\left(\!e^{z-z_0}-1+\!\int_{z_0}^z\!(e^{z-\zeta}-1)G_{2}(\zeta,z_0)d\zeta\right),\,
  \end{align*}
where $G_{i}=\sum_{n=1}^\infty K_{i}^{\ast n}$, $i=1,2$, and  
   \begin{align*}
K_{1}(z,z_0)&=1+ e^{-z}\int_{z_0}^z \Big\{\frac{\zeta _1^{\gamma }}{z^{\gamma }} e^{-\frac{\delta }{\zeta _1}+2 \zeta _1+\frac{\delta }{z}-z} \Big(\frac{q-\alpha  \zeta _1}{\zeta _1^2}-\frac{\gamma }{\zeta _1}-\frac{\delta }{\zeta _1^2}-2\Big)\Big\}d\zeta_1,\\
K_{2}(z,z_0)&=\Big(\frac{q-\alpha  z}{z^2}-\frac{\delta }{z^2}-\frac{\gamma }{z}-2\Big)e^{z-z_0}-\frac{q-\alpha  z}{z^2}.
 \end{align*}
\end{corollary}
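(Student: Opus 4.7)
The plan is to mirror the derivations of Corollaries~\ref{GenHeunCorrolary}, \ref{corr:Conf}, and \ref{corr:BiConf}, using the path-sum method of \cite{Giscard2015} applied to the companion first-order system of the Doubly-confluent Heun equation.

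First, I would recast the equation as $Y'(z)=M(z)Y(z)$ with $Y=(H_D,H_D')^{T}$ and companion matrix $M$ built from $U(z)=\delta/z^2+\gamma/z+1$ and $V(z)=(\alpha z-q)/z^2$. Because the propagator of a linear ODE is linear in the initial data, the two-term decomposition stated in the corollary amounts to re-expressing $H_D$ in the basis of initial data $\{(H_0,H'_0)=(1,1),\,(0,1)\}$. It therefore suffices to prove the two distinguished instances of the formula: (a) when $H_0=H'_0=1$, where the claim reduces to $H_D=1+\int_{z_0}^z G_1(\zeta,z_0)\,d\zeta$; and (b) when $H_0=0$, $H'_0=1$, where it reduces to $H_D=e^{z-z_0}-1+\int_{z_0}^z(e^{z-\zeta}-1)\,G_2(\zeta,z_0)\,d\zeta$.

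For instance (b), a direct computation shows that $H_D''-H_D'=G_2(z,z_0)$, so the DCHE $H_D''+U H_D'+V H_D=0$ rearranges to $G_2=-(U+1)H_D'-V H_D$. Expanding $H_D$ and $H_D'$ back in terms of $G_2$ and sorting the $e^{z-\zeta}$-weighted contributions produces the linear Volterra equation $G_2=K_2+K_2\ast G_2$ with $K_2(z,z_0)=-\bigl(U(z)+V(z)+1\bigr)e^{z-z_0}+V(z)$, which matches the stated $K_2$ once $U$ and $V$ are made explicit. The Neumann series $G_2=\sum_{n\geq 1}K_2^{\ast n}$ then follows from iterating the Volterra equation, see Eq.~(\ref{VolterraG}).

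For instance (a), the key idea is to introduce the integrating factor $\mu(z)=z^{\gamma}\exp(-\delta/z+2z)$, whose logarithmic derivative $\mu'/\mu=U+1$ absorbs the self-loop weight of the companion matrix, augmented by the $+1$ produced by the trivial solution $e^{z}$. Rewriting the stated kernel as $K_1(z,z_0)-1=\mu(z)^{-1}\int_{z_0}^z\mu(\zeta)\bigl(-(U+V+1)(\zeta)\bigr)\,d\zeta$, a one-line differentiation yields $\partial_z K_1(z,\zeta)=-(U+1)K_1(z,\zeta)-V(z)$ together with the boundary value $K_1(z,z)=1$. Substituting this into the $z$-derivative of the Volterra equation $G_1=K_1+K_1\ast G_1$ recovers the ODE $\partial_z G_1=-U G_1-V-V\int_{z_0}^z G_1\,d\zeta$ that is equivalent to the DCHE for $H_D=1+\int G_1$. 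The initial conditions $H_D(z_0)=1$ and $H_D'(z_0)=G_1(z_0,z_0)=K_1(z_0,z_0)=1$ are immediate.

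The main obstacle will be convergence of the Neumann series on the full stated interval $I$. Because $K_1$ carries an essential singularity at $z=0$ through $\exp(\delta/z-2z)$, a naive supremum bound on closed subintervals abutting $0$ fails; one must instead appeal to the general convergence theorem for $G=\sum K^{\ast n}$ proved in the appendix, which bounds the iterates by an integrated norm of $K$ and therefore tolerates the endpoint singularity. A short verification is then required to confirm that the product of $\exp(\delta/z-2z)$ from the prefactor with $\exp(-\delta/\zeta+2\zeta)$ in the integrand of $K_1$ remains locally integrable on every compact subinterval of $I$, uniformly in $z$; this is standard for integrating-factor representations and parallels the treatment already used in the Biconfluent case.
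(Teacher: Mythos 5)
Your proposal is correct, but it reaches the corollary by a genuinely different route from the paper. The paper does not verify the kernels; it \emph{derives} them: Proposition~\ref{MForm} maps $y''-B_1y'-B_2y=0$ onto $\dot{\psi}=\mathsf{M}\psi$ with the non-standard matrix $\mathsf{M}$ whose entries are $1,1,B_1+B_2-1,B_1-1$ (so that $\psi_1=y$ and $\psi_2=y'-y$, which is exactly the source of the $(1,1)$/$(0,1)$ basis you identify), Theorem~\ref{GenTheorem} then evaluates $\mathsf{U}_{11}$ and $\mathsf{U}_{12}$ of the path-ordered exponential via the path-sum formula, which produces the resolvents $G_1,G_2$ and the kernels $K_1,K_2$ automatically, and the doubly-confluent case is literal substitution of $B_1=-(\delta/z^2+\gamma/z+1)$ and $B_2=-(\alpha z-q)/z^2$ into Theorem~\ref{TheoExplicit}. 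You instead take the kernels as given and check them directly: for the $\mathsf{U}_{12}$ piece via the observation $H_D''-H_D'=G_2$ and the rearrangement into $G_2=K_2+K_2\ast G_2$ with $K_2=-(U+V+1)e^{z-z_0}+V$ (which does match the stated kernel), and for the $\mathsf{U}_{11}$ piece via the integrating factor $\mu(z)=z^{\gamma}e^{-\delta/z+2z}$, $\mu'/\mu=U+1$, together with $\partial_zK_1(z,\zeta)=-(U+1)K_1(z,\zeta)-V(z)$ and $K_1(z,z)=1$, which converts the Volterra equation back into the ODE; both computations are sound. Your verification is more elementary and self-contained (no $\ast$-product or path-ordered-exponential machinery), at the price of not explaining where the kernels come from, whereas the paper's derivation generates them once for all five Heun classes. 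On convergence, your concern about the essential singularity at $z=0$ is resolved exactly as in the appendix: $I=]0,+\infty[$ excludes $0$, $K_1$ is bounded and continuous on every compact subinterval of $I$, and the bound of Eq.~(\ref{Gbound}) then gives convergence on all of $I$, so no integrability argument beyond the paper's is actually needed.
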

~\\[-2em]

\begin{example}[Convergence to a Doubly-Confluent Heun function]
Let us now consider the following Doubly-Confluent Heun equation, once again with arbitrarily chosen parameters for the example,
 \begin{equation}\label{ex:Hd2}
\frac{d^2H_D(z)}{dz^2} + \bigg[ \frac{(-2)}{z^2}+\frac{1}{z}  + 1 \bigg]\frac{dH_D(z)}{dz} + \frac{10 z - (-1)}{z^2}H_D(z) = 0
\end{equation}
with initial conditions $H_D(1)=H'_D(1)=1/2$. 
Then Corollary~\ref{DoublyConfluentCorr} indicates that for $z\in]0,+\infty[$,
\begin{align*}
H_D(z) &= \frac{1}{2}+\frac{1}{2}\sum_{n=1}^\infty \int_{1}^z K_{1}^{\ast n}(\zeta,1)d\zeta,
\end{align*}
with 
\begin{align*}
K_{1}(z,z_0)=1+\frac{e^{-z-\frac{1}{z}}}{z}\int_{z_0}^ze^{2\zeta_1+\frac{2}{\zeta_1}}\frac{1}{\zeta_1 }(1-2\zeta_1^2+11\zeta_1)d\zeta_1
\end{align*}
In Fig.~(\ref{fig:Ex1}) below, we show a purely numerical evaluation of $H_D(z)$ together with analytical approximations based on the first few orders of the above series, i.e. we give $H_D^{(m)}(z) := 1+\sum_{n=1}^m \int_{1}^z K_{1}^{\ast n}(\zeta,1)d\zeta$, with $m=3,5,8$. This demonstrates again the convergence of the Neumann series representation of the path-sum formulation of a general Heun function, as predicted by the theory. Here the exact $H_D(z)$ and $H_D^{(m)}(z)$ become indistinguishable for $m\geq 9$. 
\begin{figure}[h!]
\centering
\includegraphics[width=1\textwidth]{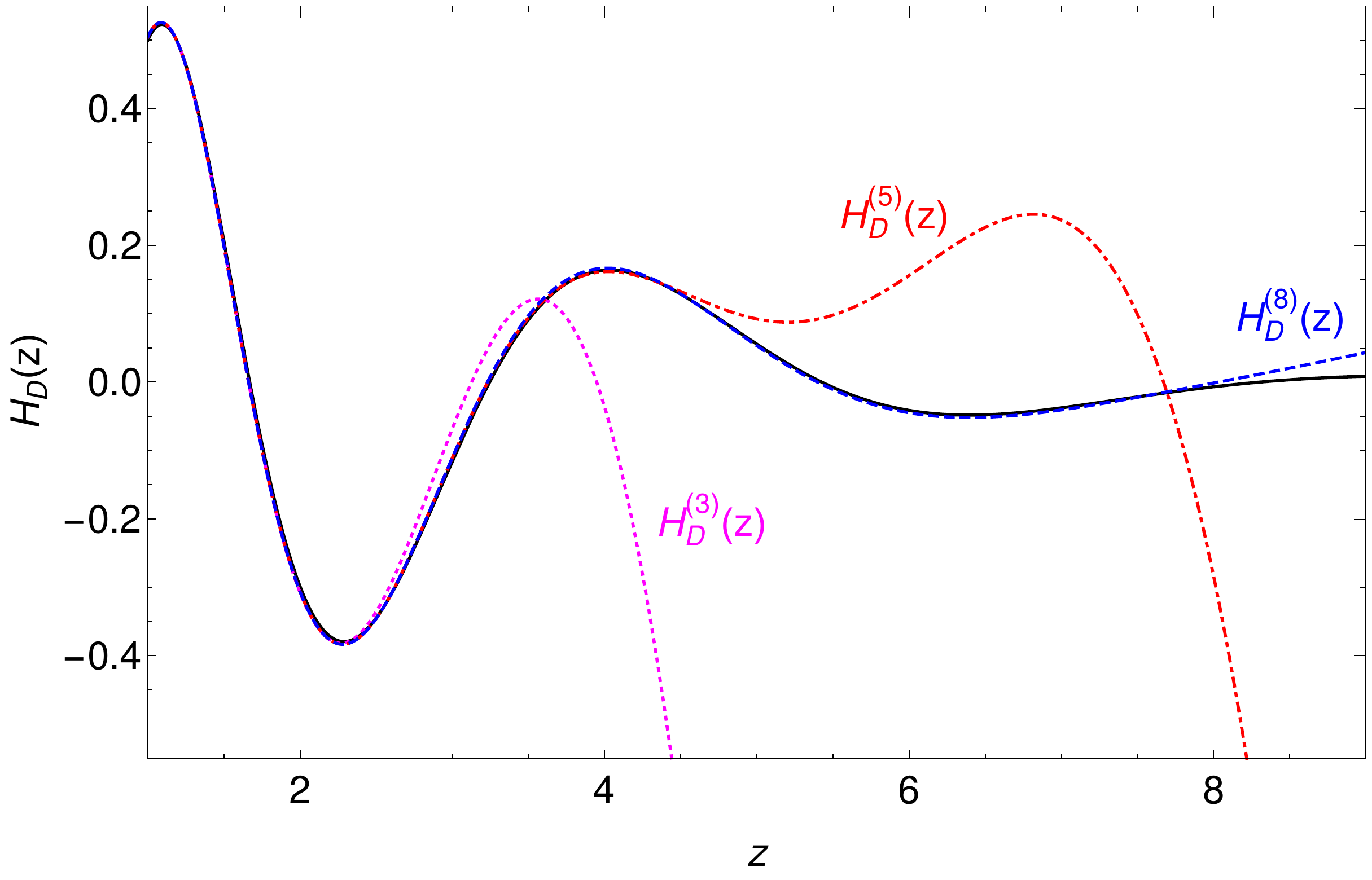}
\caption{\label{fig:Ex1}\textbf{Convergence to Doubly-Confluent Heun function with elementary integrals.} Numerical solution of the Eq.~(\ref{ex:Hd2}) (solid black line), together with its integral approximands $H^{(3)}_D(z)$ (dotted magenta line), $H^{(5)}_D(z)$ (dot-dashed red line) and $H^{(8)}_D(z)$ (dashed blue line). Note the integral series provided here is convergent for $z\in]0,+\infty[$ and show only the interval $z\in[1,9]$ for illustration purposes.} 
\end{figure}
\end{example}

\newpage
\begin{corollary}[Triconfluent Heun Equation]\label{TriconfluentCoro}
Let $H_T(z)$ be solution of the Triconfluent Heun Equation,
 \begin{equation*}
\frac{d^2H_T(z)}{dz^2} + \bigg[ \gamma + \delta z + \epsilon z^2 \bigg]\frac{dH_T(z)}{dz} + (\alpha z - q)H_T(z) = 0,
\end{equation*}
with initial conditions $H_T(z_0)=H_0$ and $\dot{H}_T(z_0)=H'_0$. Then, for any $z\in \mathbb{R}$, 
\begin{align*}
  y(z)&=H_0+H_0\int_{z_0}^z \!G_{1}(\zeta,z_0)d\zeta
+(H'_0-H_0)\left(\!e^{z-z_0}-1+\!\int_{z_0}^z\!(e^{z-\zeta}-1)G_{2}(\zeta,z_0)d\zeta\right),
  \end{align*}
where  $G_{i}=\sum_{n=1}^\infty K_{i}^{\ast n}$, $i=1,2$ and  
   \begin{align*}
K_{1}(z,z_0)&=1-e^{-\frac{1}{6} z \left(6 \gamma +2 z^2 \epsilon +3 \delta  z+6\right)}\!\!\int_{z_0}^z\!\! e^{\frac{1}{6} \zeta  \left(6 \gamma +3 \delta  \zeta +2 \zeta ^2 \epsilon +6\right)} \big(\zeta  (\alpha +\delta +\zeta 
   \epsilon )+\gamma -q+1\big) d\zeta,\\
K_{2}(z,z_0)&=-\big(\epsilon z^2+(\alpha+\delta)z -q+\gamma+1\big)e^{z-z_0}-(q-z\alpha).
 \end{align*}
\end{corollary}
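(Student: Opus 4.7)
The plan is to reduce the second-order ODE to a $2 \times 2$ first-order system and apply the path-sum formula of \cite{Giscard2015} in the same way as the preceding Heun corollaries in this section, exploiting the fact that the triconfluent case is actually the easiest because all coefficients are polynomial (there are no finite singular points). Specifically, I would set $\vec{v}(z)=(H_T(z),H_T'(z))^\top$ and recast the equation as $\vec{v}'(z)=M(z)\vec{v}(z)$ with the companion matrix. Following the pattern observable in Corollaries~\ref{GenHeunCorrolary}--\ref{DoublyConfluentCorr}, where $e^{z-z_0}$ always appears, I would split $M = I + (M-I)$ so that the identity part produces the integrating factor $e^{z-z_0}$, leaving a shifted generator whose diagonal entries are $-1$ and $-(\gamma+\delta z+\epsilon z^2+1)$ and whose off-diagonal entries are $1$ and $q-\alpha z$. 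The path-sum theorem then expresses the $(1,1)$- and $(1,2)$-entries of the propagator of this shifted system as Neumann series of elementary kernels $K_1$ and $K_2$, each being the first-return generator at the corresponding vertex of the complete two-vertex graph with loops.

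Next, I would compute these first-return generators explicitly. The kernel $K_2$ comes from the self-loop at vertex 2, contributing $(-(\gamma+\delta z+\epsilon z^2+1))e^{z-z_0}$ after incorporating the shifted diagonal weight, together with the trivial $2\to 1\to 2$ backtracking excursion whose off-diagonal weights give $-(q-z\alpha)$; collection and simplification yields the stated $K_2$. The kernel $K_1$ is more delicate: the self-loop at vertex 1 contributes $1$ after cancellation against the $e^{-(z-z_0)}$ integrating factor, while the excursion $1\to 2\to 1$ uses the dressed propagator at vertex 2 between the two hops, namely $\exp\bigl(-\int_{\zeta}^{z}(\gamma+\delta\zeta'+\epsilon\zeta'^2+1)\,d\zeta'\bigr)=\exp\bigl(-\tfrac{1}{6}z(6\gamma+3\delta z+2\epsilon z^2+6)+\tfrac{1}{6}\zeta(6\gamma+3\delta\zeta+2\epsilon\zeta^2+6)\bigr)$, which produces precisely the cubic exponential appearing in the stated $K_1$. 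Combining with the off-diagonal weight $q-\alpha\zeta$ evaluated at the turning point and the diagonal weight at vertex 2 at $\zeta$ yields the polynomial prefactor $\zeta(\alpha+\delta+\zeta\epsilon)+\gamma-q+1$, in agreement with the stated expression.

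Finally, I would assemble the representation: since $H_T$ is the first component of $\vec{v}$, the initial datum $H_0$ propagates through $G_1$ while the combination $H_0'-H_0$ (the mismatch arising from the $I$-shift) propagates through $G_2$ in the form $e^{z-z_0}-1+\int_{z_0}^{z}(e^{z-\zeta}-1)G_2(\zeta,z_0)\,d\zeta$. The initial conditions $H_T(z_0)=H_0$ and $H_T'(z_0)=H_0'$ are checked directly, since every integral $\int_{z_0}^{z_0}$ vanishes and $e^{0}-1=0$. For convergence on all of $\mathbb{R}$, I note that $K_1$ and $K_2$ are bounded continuous on every compact subset of $\mathbb{R}$, so the general bound on $\lVert G_i\rVert$ in terms of $\lVert K_i\rVert_\infty$ referenced after Eq.~(\ref{VolterraG}) gives absolute convergence on the whole real line with no further work --- this is the cleanest case among all five Heun types. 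The main technical obstacle is simply the bookkeeping in tracking signs and integration orders through the Volterra compositions when assembling the cubic exponential in $K_1$; the rest is a direct specialization of the machinery already used for the preceding four corollaries.
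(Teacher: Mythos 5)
Your overall strategy---reduce to a $2\times2$ first-order variable-coefficient system, extract the $(1,1)$ and $(1,2)$ entries of its path-ordered exponential via path-sum as Neumann series of first-return kernels, and get convergence on all of $\mathbb{R}$ because the kernels are bounded and continuous on every compact (no finite singular points)---is exactly the paper's, and your convergence remarks are correct. The gap is in the reduction itself. You take the companion matrix acting on $(H_T,H_T')^\top$ and split $M=I+(M-I)$. Since $I$ commutes with everything, this only multiplies the propagator by a global factor $e^{z-z_0}$ and leaves the shifted generator $\begin{pmatrix}-1&1\\ q-\alpha z&-(\gamma+\delta z+\epsilon z^2+1)\end{pmatrix}$. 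Applying the path-sum rules to \emph{this} matrix gives, writing $B_1=-(\gamma+\delta z+\epsilon z^2)$ and $B_2=q-\alpha z$, the vertex-$1$ kernel $-1+\int_{z_0}^{z}e^{\int_{\zeta}^{z}(B_1(\zeta')-1)d\zeta'}(q-\alpha\zeta)\,d\zeta$ and the vertex-$2$ kernel $(B_1-1)+(q-\alpha z)\bigl(1-e^{-(z-z_0)}\bigr)$, neither of which is the stated $K_1$ or $K_2$; moreover the solution would then carry coefficients $H_0$ and $H_0'$ times an overall $e^{z-z_0}$, not $H_0$ and $H_0'-H_0$. An identity shift of the generator cannot convert the coefficient $H_0'$ into $H_0'-H_0$: that requires changing the state vector, not the generator.

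What the paper actually does (Proposition~\ref{MForm}) is set $\psi=(H_T,\,H_T'-H_T)^\top$, i.e.\ conjugate the companion matrix by $\bigl(\begin{smallmatrix}1&0\\-1&1\end{smallmatrix}\bigr)$, which yields $\mathsf{M}=\begin{pmatrix}1&1\\ B_1+B_2-1&B_1-1\end{pmatrix}$. This single choice is the source of both features your write-up needs but cannot produce: $\psi_2(z_0)=H_0'-H_0$ explains the coefficient of the second term, and the $(2,1)$ entry $B_1+B_2-1=-\bigl(\epsilon z^2+(\alpha+\delta)z+\gamma-q+1\bigr)$ is precisely the polynomial appearing in both stated kernels. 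Your second paragraph recovers that polynomial by \emph{adding} the off-diagonal weight $q-\alpha\zeta$ to the diagonal weight $B_1(\zeta)-1$, but in path-sum these enter the first-return kernel in structurally different ways (the diagonal weight is exponentiated into the dressing of the sojourn at vertex $2$, the off-diagonal weight multiplies the excursion); they are never summed, so the agreement is the numerical coincidence $B_2+(B_1-1)=B_1+B_2-1$ rather than a derivation, and the analogous bookkeeping for $K_2$ visibly fails (your two contributions give $(B_1-1)e^{z-z_0}-B_2$, missing the $B_2e^{z-z_0}$ term). With the correct $\mathsf{M}$, Theorem~\ref{GenTheorem} gives $K_1=1+\int_{z_0}^{z}e^{\int_{\zeta}^{z}(B_1(\zeta')-1)d\zeta'}\bigl(B_1(\zeta)+B_2(\zeta)-1\bigr)d\zeta$ and $K_2=(B_1+B_2-1)e^{z-z_0}-B_2$, and the corollary is a literal substitution of $B_1=-(\gamma+\delta z+\epsilon z^2)$, $B_2=q-\alpha z$.
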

~\\[-1em]

\begin{example}[Convergence to a complex-valued Triconfluent Heun function]
Consider the Triconfluent Heun function defined as the solution to
 \begin{equation}\label{ex:HT}
\frac{d^2H_T(z)}{dz^2} + \bigg[ 2  - z + 7 z^2 \bigg]\frac{dH_T(z)}{dz} + (z - (2+i))H_T(z) = 0,
\end{equation}
where $i^2=-1$, and with initial conditions $H_T(-5)=H'_T(-5)=2$. Corollary~\ref{TriconfluentCoro} indicates that for $z\in\mathbb{R}$,
\begin{align*}
H_D(z) &= 2+2 \int_{z_0}^z G_{1}(\zeta,-10)d\zeta,
\end{align*}
with $G_1=\sum_{n=1}^\infty K_1^{\ast n}$ and
\begin{align*}
K_1(z,z_0)=1-e^{\frac{1}{6}  (3-14 z) z^2-3z}\int_{z_0}^ze^{-\frac{1}{6}  (3-14 \zeta) \zeta^2+3\zeta} (7\zeta^2+1-i)\,d\zeta.
   \end{align*}
We show in Fig.~(\ref{fig:HeunT}) convergence to the complex-valued triconfluent Heun function by the integral series
\begin{align*}
H_T^{(m)}(z) &:= 2+2 \int_{z_0}^z \sum_{n=1}^m K^{\ast n}_{1}(\zeta,-10)d\zeta,\\
&=2+2\int_{z_0}^z K_{1}(\zeta,-10)d\zeta+\int_{z_0}^z \int_{z_0}^z K_{1}(z,\zeta_1)K_1(\zeta_1,-10)d\zeta_1d\zeta+\cdots
\end{align*}
With this example, we emphasize that all the integrals representations obtained here remain valid for complex-valued Heun functions. 
   \begin{figure}[h!]
\centering
\includegraphics[width=1\textwidth]{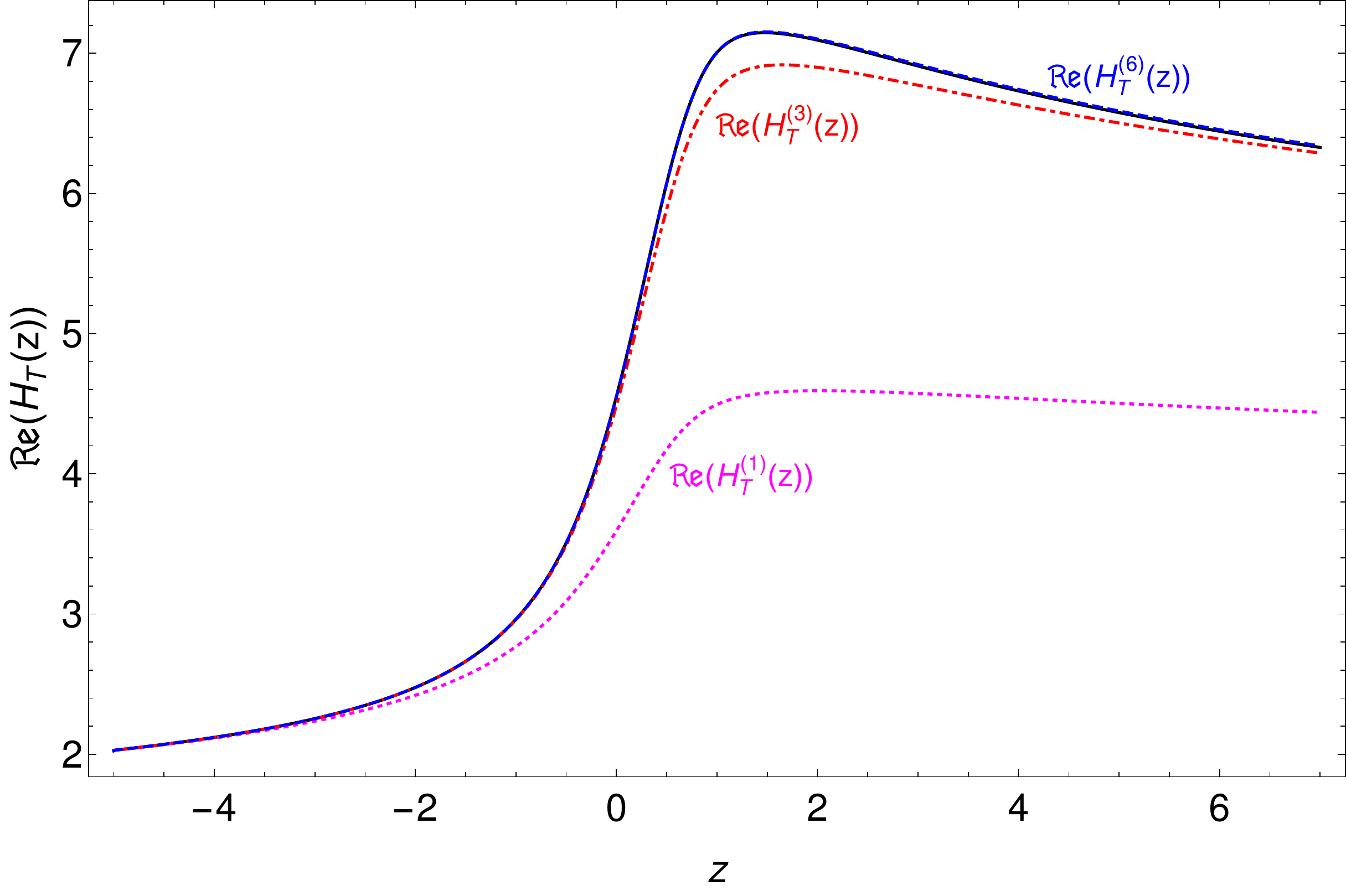}\\
\includegraphics[width=1\textwidth]{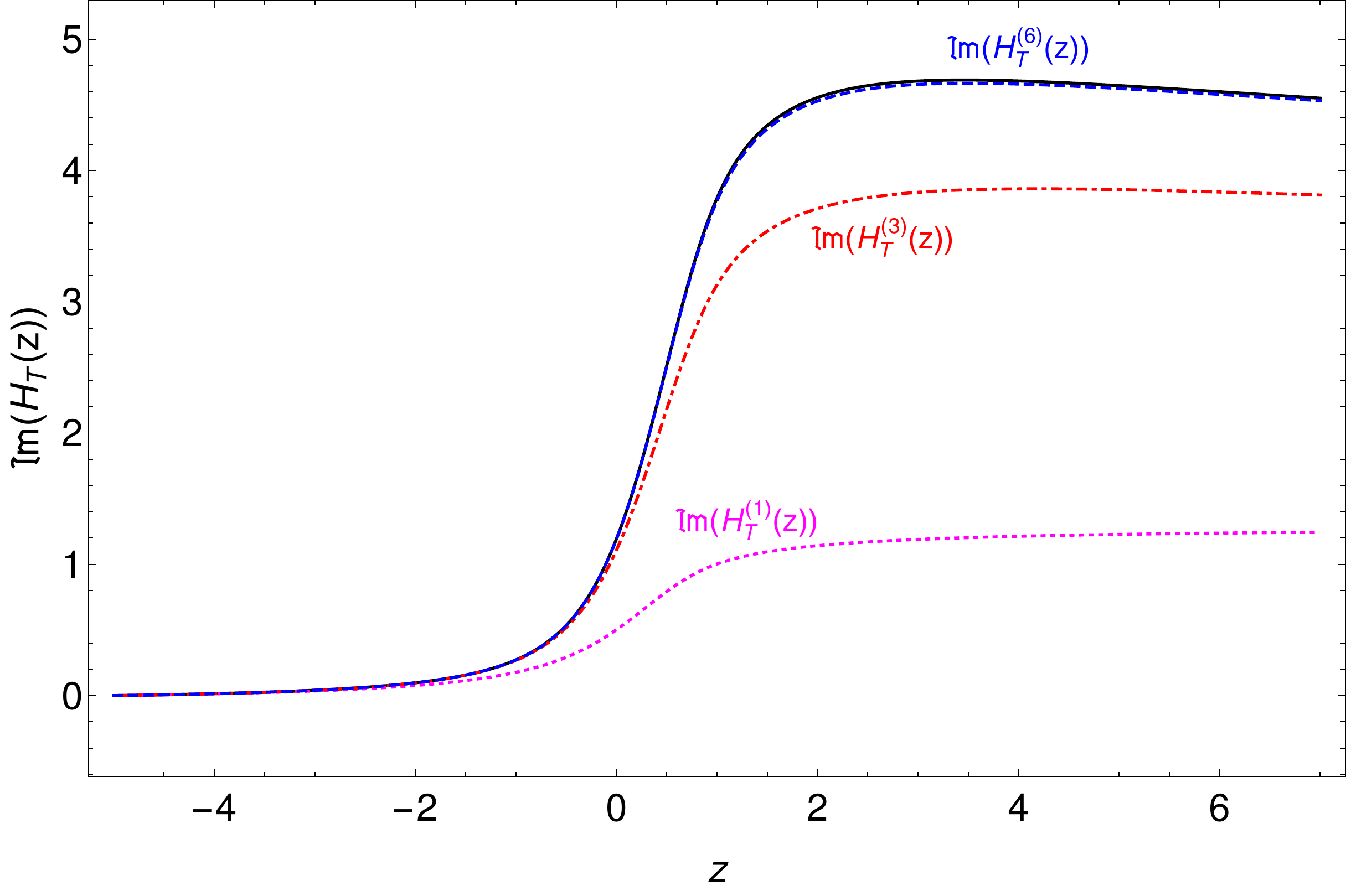}
\caption{\label{fig:HeunT}\textbf{Convergence to a complex-valued Triconfluent Heun function with elementary integrals.} Numerical solution of the Eq.~(\ref{ex:HT}) (solid black line), together with its integral approximands $H^{(1)}_T(z)$ (dotted magenta line), $H^{(3)}_T(z)$ (dot-dashed red line) and $H^{(6)}_T(z)$ (dashed blue line). Top figure: real parts of these quantities. Bottom figure: imaginary parts of these quantities. Note that the integral series provided is convergent over the entire real line, we here show only the interval $z\in [-5,7]$ for illustration purposes.} 
\end{figure}
\end{example}

 Having focused on concrete evaluations of various Heun functions in the illustrative examples, we now turn to using the elementary integral series in the field of black hole physics.

\newpage
\section{Application to Black-Hole Perturbation Theory}
\label{sec:black hole}
\subsection{Motivations}
The theory of metric perturbations of Kerr black holes is governed by the Teukolsky equation \cite{TeukEqn}. This equation provides the basic mathematical framework to study
the stability of Schwarzschild \cite{Finster2009} and Kerr black holes \cite{Costa2020} and yields physical insights in the broader field of gravitational wave astrophysics \cite{Sasaki2003}. With the advent of event detections by LIGO \cite{LIGO2016,LIGO2020}, obtaining a better analytical grasp over the solutions of the Teukolsky equation is paramount in modeling the ringdown stage \cite{London2014} of a binary black hole merger using accurate waveform templates \cite{McWilliams2019}.
\\

In the frequency domain, the Teukolsky equation can be decoupled into radial and angular components \cite{TeukEqn}. Determining the analytical solutions of the radial equation has been an active area of research since the first formulation of the equations \cite{Mano1996,Sasaki2003}. To this end, state-of-the-art approaches all rely on the same strategy: i) obtain two series expansions of the solution, one convergent near the black hole horizon the other at spatial infinity; and ii) match both expansions at some intermediate radial point.
The standard implementation of this strategy, due to Mano, Suzuki and Takasugi (MST) \cite{Mano1996,Mano1997}, relies on a series of hypergeometric functions at the black hole horizon and of Coulomb wave functions at spatial infinity. Matching both expansions requires the introduction of an auxiliary parameter $\nu$. 
We stress that this parameter is \textit{not} part of the original parameters of the Teukolsky equation. Rather $\nu$ is a mathematical checkpost introduced to establish the convergence and matching of the hypergeometric and Coulomb series \cite{Fujita2004}. The MST strategy successfully yields accurate numerical data for studying gravitational wave radiation from Kerr black holes \cite{Sasaki2003,Fujita2004}. It is \textit{``the only existing method that can be used to calculate the gravitational waves emitted to infinity to an arbitrarily high post-Newtonian order in principle.''} \cite{Sasaki2003}. At the same time, it has been explicitly recognised that the mathematical complexity of the formalism obscures physical insights into the problem \cite{Sasaki2003}. 
In particular, the auxiliary parameter $\nu$, which has been called "renormalised angular momentum" to make it more palatable, has limited correspondence to physical phenomenon, if any. 

More recently, explicit, analytic solutions to the Teukolsky equation have been established in terms of Heun functions \cite{Fiziev2}. 
Yet, Cook and Zalutskiy \cite{Cook1} note that in order to extract physical quantities of interest out of this approach, one is forced to 
revert to Leaver's formalism \cite{Leaver1985} because \textit{``the series solution around z = 1 has a radius of convergence no larger than 1, far short of infinity''}. Thus, just as for the MST formalism the problem is, in essence, that we are lacking a single representation of the solution to the Teukolsky radial equation that is convergent from the black hole horizon up to spatial infinity. 
The integral series provided in this work addresses this issue completely since it converges on this entire domain, thereby retaining the crucial features of the MST formalism that lead to its widespread applicability, while also not requiring any auxiliary, unphysical parameter. 
In a similar vein, we can assert that our formalism is suited for practical numerical and even analytical, calculations since the integral series are rapidly convergent, and their asymptotic behavior is analytically available. We may therefore also hope that the integral series representation will help solve  the well-recognised computational difficulties that emerge from the MST formalism when applied to gravitational wave physics, in particular for the two body problem \cite{Bini2013}, and in the gravitational self force program \cite{Sago2003,Hikida2004,Kavanagh2016}. 
\\

For completeness, we begin with a brief discussion of the theory of the Teukolsky equation and its reduction to Heun form. We then give the series representation of its solution. Finally, we establish its asymptotics at both the black hole horizon ($z\to 1^+$) and spatial infinity ($z\to +\infty$). 

\subsection{The Teukolsky Equation : background}
The Teukolsky Equation \cite{TeukEqn} is a gauge invariant equation \cite{GaugeInv} that governs the curvature perturbations of the Kerr black hole \cite{MWT}. By making use of the Newman-Penrose formalism \cite{NewmanPenrose}, the single master equation for the spin $(s)$ weighted scalar wave function $_{s}\psi$ in Boyer-Lindquist co-ordinates $\{t,r,\theta,\phi \}$ \cite{BoyerLindquist} and the Kinnersley tetrad \cite{Kintetrad} is written as:
\begin{gather}
\bigg[ \frac{(r^2 + a^2)^2}{ \Delta} - a^2 \sin^2 \theta \bigg] \frac{ \partial^2 _{s}\psi}{\partial t^2} +
\bigg(\frac{4Mar}{\Delta}\bigg) \frac{\partial^2 _{s}\psi}{\partial t \partial \phi}  + \bigg[ \frac{a^2}{\Delta} - \frac{1}{sin^2 \theta} \bigg] \frac{\partial^2 _{s}\psi}{\partial \phi^2} \notag \\
- \Delta^{-s} \frac{\partial}{\partial r} \bigg( \Delta^{s+1} \frac{\partial _{s}\psi}{\partial r} \bigg) - \frac{1}{\sin \theta} \frac{\partial}{\partial \theta} \bigg( \sin \theta \frac{ \partial _{s}\psi}{\partial \theta} \bigg) -2s \bigg[ \frac{a(r-M)}{\Delta} + \frac{i \cos \theta}{\sin^2 \theta} \bigg] \frac{\partial _{s}\psi}{\partial \phi} \notag \\ -2s \bigg[ \frac{M(r^2 - a^2)}{\Delta} - r - ia\cos \theta \bigg] \frac{\partial}{\partial t} + (s^2 \cot^2 \theta - s)_{s}\psi = 4 \pi \Sigma T \label {eq:TeukolskyEqn}
\end{gather}
where the auxiliary variables are given by:
\begin{gather}
\Sigma \equiv r^2 + a^2\cos^2 \theta, 
\Delta \equiv r^2 - 2Mr + a^2
\end{gather}
Here, $M$ is the mass of the black hole, $a$ is its angular momentum (per unit mass), $T$ is the source term built from the energy-momentum tensor \cite{TeukEqn} and the spin parameter $s = 0, \pm 1/2, \pm 1, \pm 2 \pm 3/2$ for scalar, neutrino, electromagnetic, gravitational and Rarita-Schwinger \cite{RaritaSchwinger} fields respectively. It reduces to the Bardeen-Press equation  \cite{BardeenPress} in the non-rotating $(a = 0)$ case.

The equation \ref{eq:TeukolskyEqn} can be separated in time \cite{Krivan1997} and frequency domain
\cite{TeukEqn}. The latter can be performed for the vacuum case $(T = 0)$ by the following  separation ansatz:
\begin{gather}
_{s}\psi(t,r,\theta, \phi) = e^{-i \omega t}e^{i m \phi}S(\theta)R(r).
\end{gather}
For the radial function $R(r)$ we obtain the Teukolsky Radial Equation (TRE): 
\begin{gather} \label{eq:TRE}
\Delta^{-s} \frac{d}{dr} \bigg[ \Delta^{s+1} \frac{d R(r)}{dr} \bigg] + \bigg[ \frac{K^2 -2is(r-M)K}{\Delta} + 4is\omega r - \lambda \bigg]R(r) = 0, 
\end{gather}
where, 
\begin{gather}
K \equiv (r^2 + a^2)\omega -am, \quad
\lambda \equiv \, _{s}A_{lm}(a\omega) + a^2\omega^2 -2am\omega.
\end{gather}
For the angular equation, we make $x \equiv \cos \theta$. Now the function  $S(\theta) = \, _{s}S_{lm}(x;a\omega)$ is the spin weighted spheroidal function \cite{Breuer1977} which gives the
solution for the Teukolsky Angular Equation (TAE):
\begin{align} \label{eq:TAE}
&\partial_{x} \bigg[(1-x^2)\partial_{x}[_{s}S_{lm}(x;c)] \bigg] + \bigg[ (cx)^2 - 2csx + s \\&\hspace{35mm}  +\, _{s}A_{lm}(c) - \frac{(m+sc)^2}{1-x^2} \bigg]  \,_{s}S_{lm}(x;c) = 0, \nonumber
\end{align}
where $c = a\omega$ is the oblateness parameter, $m$ is the azimuthal separation constant and $_{s}A_{lm}(c)$ is the angular separation constant. The equations \ref{eq:TRE} and \ref{eq:TAE} are coupled equations which require simultaneous evaluation of the parameters $\omega$ and $_{s}A_{lm}(c)$. Given a value for $_{s}A_{lm}(c)$, we can solve \ref{eq:TRE} for the complex frequency $\omega$ and given the latter, we can solve \ref{eq:TAE} as an eigenvalue problem for $_{s}A_{lm}(c)$.

\subsection{Teukolsky Radial Equation in Heun Form}
We now reduce the Teukolsky Radial Equation to the non-symmetrical Heun form, which allows us to represent its solution with the results of Section.~\ref{sec:ExplicitRes}. There is one small consideration to be noted: depending on the sign of the spin $s$ we wish to operate in, certain parameters of the CHE form of the TAE and TRE flip their signs as given in \cite{Fiziev2}. However this is not of relevance for our purposes since our main aim is to work with the CHE \textit{form} of the equations that obviously remains irrespective of the sign of the spin parameter.

The radial function $R(r)$ solution to Eq.~\ref{eq:TRE} has three singularities: an irregular singular point at $r = \infty$ and two regular singular points corresponding to the roots of $\Delta = 0$, which are
\begin{equation*}
r_{\pm} = M \pm \sqrt{M^2 - a^2}
\end{equation*}
The values $r_{\pm}$ correspond to the event and Cauchy horizon respectively (for an in-depth introduction to the notation and terminology on back-hole mathematics, we refer the reader to \cite{MWT}). Having identified these, we may now map the Teukolsky Radial Equation into an Heun equation. We close following the standard treatment \cite{Cook1}. We begin by letting the radial function $R$ be of the form  
\begin{equation}
R(r) = (r-r_{+})^{\xi}(r-r_{-})^{\eta}e^{\zeta r}H(r), \label{eq:RtoH}
\end{equation}
where the parameters $\zeta, \xi, \eta$ are given by
\begin{align} 
&\zeta = \pm i\omega \equiv \zeta_{\pm},\qquad
\xi = \frac{-2 \pm (s + 2i\sigma_{+})}{2} \equiv \xi_{\pm},\label{eq:18}\\
&\eta = \frac{ -s \pm (s - 2i\sigma_{-})}{2} \equiv \eta_{\pm},\qquad
\sigma_{\pm} = \frac{2\omega M r_{\pm} - ma}{r_{+} - r_{-}}.\nonumber
\end{align}
With the dimensionless variables
\begin{align*}
\bar{r} \equiv \frac{r}{M}, \quad
\bar{a} \equiv \frac{a}{M}, \quad
\bar{\omega} \equiv M\omega, \quad 
\bar{\zeta} \equiv M\zeta,
\end{align*}
we transform the radial coordinate $r$ into the dimensionless variable $z$ defined by
\begin{equation*}
z = \frac{r - r_{-}}{r_{+} - r_{-}} = \frac{ \bar{r} - \bar{r}_{-}}{\bar{r}_{+} - \bar{r}_{-}}
\end{equation*}
Now, any of the eight possible combinations of the parameters $\{ \zeta, \xi, \eta \}$ given in Eqs.~(\ref{eq:18}) will reduce the Teukolsky Radial Equation (\ref{eq:TRE}) into the following 
equation for the auxiliary function $H$,
\begin{equation}
\frac{d^2 H(z)}{dz^2} + \Bigg(\frac{\gamma}{z} + \frac{\delta}{z-1} +4p\Bigg)\frac{d H(z)}{dz} + \frac{4\alpha pz - \sigma}{z(z-1)}H(z) = 0 \label{eq:HeunH}
\end{equation}
which is a Confluent Heun equation. Here, the following variables have been introduced to clarify the equation,
\begin{align}
&p = (\bar{r}_{+} - \bar{r}_{-})\frac{\bar{\zeta}}{2}, \qquad \hspace{-2mm}
\alpha = 1 + s + \xi + \eta -2\bar{\zeta} + s\frac{i\bar{\omega}}{\bar{\zeta}}, \label{Param}\\
&\gamma = 1 + s + 2\eta, \qquad
\delta = 1 + s + 2\xi, \nonumber\\
&\sigma = \,_{s}A_{lm}(\bar{a}\bar{\omega}) + \bar{a}^2\bar{\omega}^2 -8\bar{\omega}^2 + p(2\alpha + \gamma - \delta) + \bigg(1 + s - \frac{\gamma + \delta}{2} \bigg) \bigg( s + \frac{\gamma + \delta}{2} \bigg).\nonumber
\end{align}
Furthermore, the local solutions at the singularities have the exact same form for all eight combinations of the parameters $\{ \zeta, \xi, \eta \}$ given in Eqs.~(\ref{eq:18}). More precisely, we get
\begin{subequations}
\begin{align}
&\lim_{z \to 0} R(z) \sim z^{-s+i\sigma_{-}} \quad \text{or} \quad z^{-i\sigma_{-}}, \label{eq:locsol1}\\
&\lim_{z \to 1} R(z) \sim (z-1)^{-s-i\sigma_{+}} \quad \text{or} \quad (z-1)^{i\sigma_{+}},\label{eq:locsol2}\\
&\lim_{z \to \infty} R(z) \sim z^{-1-2s+2i\bar{\omega}}e^{i(\bar{r}_{+} - \bar{r}_{-})\bar{\omega}z} \quad \text{or} \quad z^{-1-2i\bar{\omega}}e^{-i(\bar{r}_{+} - \bar{r}_{-})\bar{\omega}z} \label{eq:locsol3}
\end{align}
\end{subequations}
Now, the above forms correspond to behaviour of the perturbations at the boundary conditions of the event and Cauchy horizon and spatial infinity. By suitable choice of the signs in \ref{eq:locsol1},\, \ref{eq:locsol2} and \ref{eq:locsol3}, we can obtain expressions for quantities of physical interest such as Quasinormal Modes and Totally Transmitting Modes \cite{Cook1}. Also, see \cite{Navaes2019,Suzuki1998,Suzuki1999,Yoshida2010} for applications of the Heun form of the Teukolsky equations. 
The equation can be solved by various methods such as Frobenius series about the singular points \cite{Fiziev2}  and continued fractions \cite{Leaver1985}.

\subsection{Representation of the Teukolsky radial function convergent on $]1,+\infty[$}\label{TRESol}
\subsubsection{Elementary integral series}
The solution of the Confluent Heun equation~(\ref{eq:HeunH}) satisfied by the auxiliary function $H(z)$ is described by Corollary~(\ref{corr:Conf}). Since the singular points are located at $z=0,1,+\infty$, given any initial conditions for $H(z_0)$ and $\dot{H}(z_0)$ at $z_0\in]1,+\infty[$, the integral series representation of $H(z)$ is guaranteed to converge on the entire domain $]1,+\infty[$. This crucial property stands in stark contrast with the hypergeometric and Coulomb series, which converge close to 1 and to $+\infty$, respectively. Because of this, we do not need to introduce the unphysical parameter $\nu$.

Recall that the Teukolsky radial function $R$ and auxiliary function $H$ are related by Eq.~(\ref{eq:RtoH}). The auxiliary function is a confluent Heun function given by the following integral series representation, convergent for any $z\in]1,+\infty[$,
\begin{align*}
  H(z)&=H_0+H_0\int_{z_0}^z G_{1}(\zeta,z_0)d\zeta
+(H'_0-H_0)\!\left(\!e^{z-z_0}-1\!+\!\int_{z_0}^z\!\!(e^{z-\zeta}-1)G_{2}(\zeta,z_0)d\zeta\right)\!,
  \end{align*}
where  $G_{i}=\sum_{n=1}^\infty K_{i}^{\ast n}$, $i=1,2$, and  
   \begin{align*}
K_{1}(z,z_0)&=1+ e^{-(1+4p) z} z^{-\gamma } (z-1)^{-\delta }\times\\
&\hspace{5mm}\int_{z_0}^z \Big\{
e^{(1+4p)\zeta}\zeta^{\gamma } \left(\zeta-1\right)^{\delta} 
\left(\frac{\sigma-4\alpha p  \zeta}{\left(\zeta-1\right) \zeta}-\frac{\gamma }{\zeta}-\frac{\delta }{\zeta-1}-4p -1\right)\Big\}d\zeta,\\
K_{2}(z,z_0)&=\left(\frac{\sigma-4\alpha p z}{\left(z-1\right) z}-\frac{\gamma }{z}-\frac{\delta }{z-1}-4p -1\right)e^{z-z_0}-\frac{\sigma-4\alpha p  z}{(z-1) z}.
 \end{align*}
Here we assumed $z_0\in]1,+\infty[$ then $H_0:=H(z_0)$, $H'_0:=\dot{H}(z_0)$ and all parameters are given by Eq.~(\ref{Param}).\\

Witnessing to the fact that the above representation is convergent for all $z\in]1,+\infty[$, we here recover the asymptotic behavior of $H(z)$ in both limits $z\to 1^+$ and $z\to +\infty$. We emphasize that this is not possible with any single series representation of $H(z)$, which converges either in the vicinity of $1^+$ or of $+\infty$.

\subsubsection{Asymptotic behavior for $z\to +\infty$} 
From now on, we write $F(z)\sim_{a.e.}$ to present the leading term of the asymptotic expansion of the function $F(z)$, disregarding constant factors. For example, we would write $1+2/z\sim_{a.e} z^{-1}$ as $z\to 0$.\\


We begin by determining the asymptotic behavior of $K_1(z,z_0)$ for $z\gg 1$.  This depends on two cases: $p=0$ and $p\neq 0$. We suppose first that $p=0$ and assume that $\delta+\gamma>0$. In this situation, the confluent Heun function becomes a well understood hypergeometric function \cite{Erdelyi1955,Motygin2018OnEO} for which we will nonetheless show that we recover the correct asymptotic behavior. Setting $p=0$ we get, as $z\to +\infty$, 
\begin{align*}
K_1(z,z_0)&\sim_{a.e}1+ e^{-(1+4p) z} z^{-\gamma } (z-1)^{-\delta }\left(-e^{z } z ^{\gamma +\delta }+e^{z_0} z_0^{\gamma +\delta }\right),\\
&\sim_{a.e}e^{-z}z^{-\delta-\gamma}e^{z_0}z_0^{\gamma +\delta }.
\end{align*}
Then $K_1(z,z_0)$ is asymptotically the product of a function depending only on $z$ and of a function depending only on $z_0$. This property is sufficient to determine the asymptotic behavior of $G_1$ in closed-form \footnote{This is because the solution of a linear Volterra integral equation of the second kind with kernel  $K_1(z,z_0)=k(z)l(z_0)$ is known exactly \cite{giscardvolterra}.}
$$
G_1(z,z_0)\sim_{a.e}e^{-z}z^{-\delta-\gamma}e^{z_0}z_0^{\gamma +\delta }e^{\int_{z_0}^ze^{-\zeta}\zeta^{-\delta-\gamma}e^{\zeta}\zeta^{\gamma +\delta }d\zeta}=(z_0/z)^{\delta+\gamma}
$$
implying that $\int_{z_0}^z G_1(\zeta,z_0)d\zeta\sim_{a.e.} z^{1-\delta-\gamma}$ for $z\to+\infty$. Analyzing $K_2$ and $G_2$ yields the same results. Indeed, with $p=0$, we have
$$
K_2(z,z_0)\sim_{a.e.}\left(\frac{-1}{z}(\gamma+\delta)-1\right)e^{z-z_0},
$$
which is the product of a function of $z$ and a function $z_0$ so we determine
$$
G_2(z,z_0)\sim_{a.e.}\left(\frac{-1}{z}(\gamma+\delta)-1\right)e^{z-z_0}e^{-(z-z_0)}(z_0/z)^{\gamma+\delta},
$$
that is $G_2\sim_{a.e.}z^{-\gamma-\delta}$. From there $e^{z-z_0}-1\!+\!\int_{z_0}^z\!\!(e^{z-\zeta}-1)G_{2}(\zeta,z_0)d\zeta\sim_{a.e.}z^{1-\gamma-\delta}$. Thus for $p=0$ and $\delta+\gamma>0$, we get $H(z)\sim_{a.e.}z^{1-\delta-\gamma}$ regardless of the conditions at $z_0$ and provided $\delta+\gamma>0$, as expected \cite{Erdelyi1955}. Further cases arise for $\delta+\gamma\leq 0$ but we do not discuss these here as 
they correspond to well known hypergeometric results.\\

Let us now suppose that $p\neq 0$. Then, since
$$
e^{4p\zeta}\zeta^{\gamma } \left(\zeta-1\right)^{\delta} f(\zeta)=-e^{4p\zeta}\zeta^{\gamma +\delta}\left(\frac{4\alpha p+\gamma+\delta}{\zeta}+4p+O(1/\zeta^2)\right),$$
we have, asymptotically for $z\to +\infty$,
$$
K_1(z,z_0)\sim_{a.e.}1-e^{-4p z}z^{-\gamma}z^{-\delta}\times z^{\gamma +\delta } \left(e^{4 p z}-4 \alpha  p \,E_{-\gamma -\delta +1}(-4 p z)\right).
$$
where $E_{n}(z)$ is the exponential integral function, with asymptotic expansion $E_{n}(x)\sim_{a.e.}e^{-x}/x$ as $x\to+\infty$. This result  greatly simplifies $K_1$, reducing it to 
$$
K_1(z,z_0)\sim_{a.e}-\frac{\alpha}{z},\text{ as }z\to+\infty. 
$$
This allows us to determine the asymptotic behavior of $G_1$ straightforwardly as 
$$
G_1(z,z_0)\sim_{a.e}\frac{-\alpha}{z}e^{\int_{z_0}^z -\alpha /\zeta\, d\zeta}= -\alpha z^{-1-\alpha},
$$
and therefore $\int_{z_0}^z G_1(\zeta,z_0)d\zeta\sim_{a.e.} z^{-\alpha}$ for $z\to+\infty$.\\

We proceed similarly for $K_2$ and $G_2$. We have
$
K_2(z,z_0)=f(z)e^{z-z_0}+O(1/z),
$
so that asymptotically $K_2(z,z_0)\sim_{a.e.} f(z) e^z e^{-z_0}$ for $z\to+\infty$. Then $K_2(z,z_0)$ is asymptotically the product of a function depending only on $z$ and of a function depending only on $z_0$. We therefore obtain
$$
G_2(z,z_0)\sim_{a.e.} e^{z-z_0}f(z) e^{\int_{z_0}^zf(\zeta)d\zeta},\text{ as } z\to+\infty.
$$
The right hand-side is 
\begin{align*}
e^{z-z_0}f(z) e^{\int_{z_0}^zf(\zeta)d\zeta}&=e^{-4 p (z-z_0)}\left(\frac{z_0-1}{z-1}\right)^{\delta} \left(\frac{z_0}{z}\right)^{\gamma
   +\sigma} \left(\frac{1-z_0}{1-z}\right)^{4 \alpha  p-\sigma } \\&\hspace{-5mm}\times\frac{1}{z(z-1)}(\gamma -z (\gamma +\delta +4 p (\alpha
   +z-1)+z-1)+\sigma )
\end{align*} 
which yields the asymptotic result,
$$
G_2(z,z_0)\sim_{a.e.} e^{-4p z}z^{-\delta-\gamma-4\alpha p},\text{ as }z\to+\infty.
$$
This implies that 
$$
\left(\!e^{z-z_0}-1\!+\!\int_{z_0}^z\!\!(e^{z-\zeta}-1)G_{2}(\zeta,z_0)d\zeta\right)\sim_{a.e.}
e^{-4pz} z^{-4\alpha p -\delta-\gamma}.
$$
Gathering our results, we conclude that when $p\neq 0$,
\begin{align*}
&H(z)\sim_{a.e.} z^{-\alpha}~~~\text{or}~~~H(z)\sim_{a.e.} e^{-4pz} z^{-4\alpha p -\delta-\gamma},~\text{ as }z\to+\infty,
\end{align*}
which gives the same asymptotic behavior as obtained from series designed to converge when $z\to+\infty$ \cite{Motygin2018OnEO,Cook1,Ronveaux1995}. The result for $p=0$ yields the correct asymptotics of the hypergeometric function obtained in this case. 

\subsubsection{Asymptotic behavior for $z\to 1^+$} 
In this situation, we begin with 
\begin{align*}
K_{1}(z,z_0)&\sim_{a.e}1+ e^{-(1+4p) z} (z-1)^{-\delta }\int_{z_0}^z \Big\{
e^{(1+4p)\zeta} \left(\zeta-1\right)^{\delta-1} 
\left(c+c'(\zeta-1)\right)\Big\}d\zeta,
\end{align*}
where $c=\sigma-4\alpha p - \delta$ and $c'=\gamma+4p+1$.
In order to progress without presenting cumbersome equations, denote $F(\zeta)$ the following indefinite integral 
\begin{align*}
F_\delta(\zeta):&=\int e^{(1+4p)\zeta} \left(\zeta-1\right)^{\delta}d\zeta,\\
&=-e^{4 p+1} (\zeta-1)^{\delta +1} E_{-\delta }\big(-(1+4 p) (\zeta-1)\big),
\end{align*}
where $E_{n}(x)$ is the exponential integral function. In particular $E_{n}(x)\sim_{a.e.}x^{n-1}c_1+c_2$ as $x\to 0^+$ and where $c_1$ and $c_2$ are non-zero real constants that are irrelevant here. This implies $F_\delta(\zeta)\sim_{a.e.}(\zeta-1)^{1+\delta}$.
Now given that
$$
K_1(z,z_0)=1+ e^{-(1+4p) z} (z-1)^{-\delta }\big(c F_{\delta-1}(z)-cF_{\delta-1}(z_0)+c'F_{\delta}(z)-c'F_{\delta}(z_0)\big).
$$
then
$$
K_1(z,z_0)\sim_{a.e.}1,\text{ as }z\to1^+.
$$
This implies that $G_1(z,z_0)\sim_{a.e} e^{z-z_0}$ and therefore $\int_{z_0}^z G_1(\zeta,z_0)d\zeta \sim_{a.e.} 1$ as $z\to 1^+$.

For $K_2$ and $G_2$ we begin by noting that for $z$ close to 1, 
$$
K_2(z,z_0)\sim_{a.e.}\frac{4\alpha p}{z-1}(1-e^{z-z_0})-\frac{\delta}{z-1},\text{ as } z\to 1^+.
$$
from which it follows that 
$
G_2(z,z_0)\sim_{a.e.}(z-1)^{-\delta}
$ for $z\to 1^+$,  and therefore
$$
\left(\!e^{z-z_0}-1\!+\!\int_{z_0}^z\!\!(e^{z-\zeta}-1)G_{2}(\zeta,z_0)d\zeta\right)\sim_{a.e.}(z-1)^{1-\delta},\text{ as }z\to 1^+.
$$
Note that this assumes that $\delta>0$. If this is not the case, then the asymptotics is $O(1)$.

Gathering our results, we get that 
$$
H(z)\sim_{a.e} 1~\text{ or }~(z-1)^{1-\delta},\text{ as }z\to 1^+.
$$
which gives the same asymptotic behavior as obtained from series representations of $H(z)$ for $z$ close to $1$ \cite{Motygin2018OnEO,Cook1,Ronveaux1995}.

\subsection{Remarks on the Teukolsky Angular Equation}
The Teukolsky angular equation \ref{eq:TAE} has two regular singular points at $x = \pm 1$ and an irregular singular point at infinity. Just like the radial equation, we can transform it to either the Bocher symmetrical form \cite{Cook1} or the non-symmetric canonical form of the confluent Heun equation \cite{Fiziev2}. It follows that any solution to the angular equation has an integral series representation as described in this work.

The radial and angular Teukolsky equations are coupled equations, as shown e.g. by the presence of the frequency parameter $\omega$ and of the angular eigenvalue $_{s}A_{lm}$ in both the angular and radial equations. Therefore, when it comes to determining physical quantities of interest, such as quasinormal modes, the two equations must be  solved \textit{simultaneously} (we refer the reader to  \cite{Berti2006,Fiziev2,Leaver1985,Fackrell1977,Hughes2000} for methods to that end). While using integral series to solve both the radial and angular equations separately and then match the solutions is feasible, a truly ambitious alternative approach would be solve the coupled system directly with the path-sum formalism. Indeed, natively this formalism was designed to solve systems of coupled (differential) equations with variable coefficients. So much so that in order to solve the Heun equations and get an integral series representation from path-sum, the first step (see Appendix~\ref{AppendixProofs}) is to map any Heun equation back onto a \emph{system} of coupled differential equations. We believe such an approach to be feasible not only for the system comprising the angular and radial Teukolsky equations, but also for the underlying pair of coupled equations in the Penrose-Newmann formalism from  which Teukolsky obtained his equation \cite{TeukEqn}. This is beyond the scope of this work.

\section{Conclusion}\label{sec:conclusion}
In this work, we present novel integral series representations for all functions of Heun class. The major advantage of these representations is that 1) they involve only elementary integrands (rational and exponential functions); 2) they are unconditionally convergent everywhere except at the singular points of the Heun function being studied; and 3) they demonstrate that all functions of Heun class can be obtained from one or at most two Volterra equations of the second kind. Points 1) and 2) above are crucial in order to obtain physically well-behaved solutions of the homogenous Teukolsky radial
equation by means of Heun functions, as this necessitates a series representation that is convergent from the black hole horizon up to spatial infinity.
This is not feasible with state-of-the-art techniques involving hypergeometric and Coulomb series representations of confluent Heun function. The former is convergent \textit{only} near the horizon while the later is convergent \textit{only} at spatial infinity. In order to match both representations of the solutions, a book-keeping unphysical parameter $\nu$ has to be introduced which, at the very least, obscures the physical picture. 
Unlike the above MST strategy, the integral series proposed here converge over the entire spatial domain from the horizon up to infinity, thus bypassing the need for parameters that are not already present in the Teukolsky equation. 

While this work is devoted to establishing the well-posedness of the integral series formalism, the next obvious step is to use it to actually compute quantities of physical interest for the rapidly growing field of gravitational wave astrophysics. These include gravitational wave fluxes \cite{Fujita2004}, quasinormal modes \cite{London2014,Cook1} and totally transmitting modes \cite{Cook1}, all of which should now be accessible without Leaver's method (which suffers from numerical stability issues) nor the MST strategy. We hope that the formalism can also help resolve mathematical difficulties that arise in implementing the MST formalism in the various aspects of the two body problem in general relativity \cite{Bini2013,Kavanagh2016}. 

Finally, we stress that our novel mathematical results were obtained by applying the method of path-sum to Heun's equation. This method, relying on the algebraic combinatorics of walks on graphs, was originally designed to solve systems of coupled differential equations and compute matrix functions. While it already proved successful in the fields of quantum dynamics, matrix theory and combinatorics, we think that this work opens new venues for its use in ordinary differential equations and general relativity. In particular, path-sum is natively adapted to solve directly the system of coupled equations which, in the Penrose-Newman formalism, underlies the Teukolsky equation.

\begin{acknowledgements}
P.-L. G. is supported by the Agence Nationale de la Recherche
young researcher grant No. ANR-19-CE40-0006.  
\end{acknowledgements}

\newpage
\setcounter{section}{0}
\renewcommand{\thesection}{\Alph{section}}
\section{Appendix: Proof of the results}\label{AppendixProofs}
The method of proof is as follows: we map the Heun equation onto a system of two coupled linear first order differential equations with variable coefficients. The solution of such systems is given by a formal object called a path-ordered exponential, which we present below. Then we use the path-sum method to evaluate this path-ordered exponential. Finally we extract the desired Heun function from the path-sum solution.

\subsection{Path-ordered exponentials}
All the results are corollaries of the general purpose method of path-sum, which permits the exact calculation of path-ordered exponentials of finite variable matrices. The path-ordered exponential $\mathsf{U}(z)$ of a variable matrix $\mathsf{M}(z)$ is the unique matrix solution to the system of coupled first order ordinary linear differential equations with variable coefficients encoded by $\mathsf{M}(z)$, i.e.
\begin{equation}\label{OrdExpSys}
\frac{d}{dz}\mathsf{U}(z,z_0)=\mathsf{M}(z).\mathsf{U}(z,z_0),
\end{equation}
and such that for all $z_0$, $\mathsf{U}(z_0,z_0)=\mathsf{Id}$ is the identity matrix of relevant dimension. The solution of Eq.~(\ref{OrdExpSys}) is the path-ordered exponential $\mathsf{U}(z,z_0)$ of $\mathsf{M}$, denoted
\begin{align*}
\mathsf{U}(z,z_0)&=\mathcal{P}e^{\int_{z_0}^{z} \mathsf{M}(\zeta) d\zeta},\\
&=\mathsf{Id}+\int_{z_0}^z \mathsf{M}(\zeta_1)d\zeta_1+\frac{1}{2}\int_{z_0}^z\int_{z_0}^z \mathcal{P}\{\mathsf{M}(\zeta_2)\mathsf{M}(\zeta_1)\}d\zeta_2d\zeta_1+\cdots 
\end{align*}
where $\mathcal{P}$ the path-ordering operator,
$$
\mathcal{P}\big\{\mathsf{M}(\zeta_2)\mathsf{M}(\zeta_1)\big\}=\begin{cases}
\mathsf{M}(\zeta_2)\mathsf{M}(\zeta_1),&\text{if }\zeta_2\geq \zeta_1\\
\mathsf{M}(\zeta_1)\mathsf{M}(\zeta_2),&\text{otherwise.}                             
\end{cases}
$$
We refer the reader to \cite{Dyson1952} for the origins of this notation.\\[-1em]

Although used primarily to gain analytical understanding into the dynamics of quantum systems driven by time-dependent forces, path-sum relies solely on the algebraic combinatorics of walks on graphs that is valid 
irrespectively of the nature or size of the matrix $\mathsf{M}$. It is also only distantly related to the famous Feynman's path-integrals.
The interest here is that when calculating path-ordered exponentials, the method natively generates integral representations of the solutions. The strategy thus consists in calculating the ordered exponential of a matrix $\mathsf{M}(z)$ designed so that the solution of Eq.~(\ref{OrdExpSys}) should involve the desired Heun's function.

In order to recover an integral representation for all of Heun's functions, remark that Eqs.~(\ref{eq:Heun})--\ref{eq:HeunTriConfluent}) all take the form
\begin{equation}\label{HeunEq}
y''(z)-B_1(z)y'(z)-B_2(z)y(z)=0,
\end{equation}
We thus focus on obtaining the integral representation of the solution of Eq.~(\ref{HeunEq}) in terms of integrals involving $B_1$ and $B_2$, irrespectively of what these functions are. To this end, we begin by exihibiting a matrix $\mathsf{M}(z)$ whose path-ordered exponential involves a function solution to Eq.~(\ref{HeunEq}).

\begin{proposition}\label{MForm}
 Let $y(z)$ be a solution of Eq.~(\ref{HeunEq}) with initial conditions $y(z_0)=y_0$ and $y'(z_0)=y'_0$. Let
 \begin{equation}\label{Mz}
 \mathsf{M}(z)=\begin{pmatrix}1&1\\
 B_1(z)+B_2(z)-1&B_1(z)-1\end{pmatrix},
 \end{equation}
 and let $\mathsf{U}(z,z_0):=\mathcal{P}e^{\int_{z_0}^{z} \mathsf{M}(\zeta) d\zeta}$ be the path-ordered exponential of $\mathsf{M}$.
 Then
 $$
y(z) = y_0\mathsf{U}_{11}(z,z_0) +(y'_0-y_0)\mathsf{U}_{12}(z,z_0).
 $$
\end{proposition}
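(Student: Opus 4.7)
The plan is to reduce the scalar second-order equation (\ref{HeunEq}) to a two-dimensional first-order linear system whose coefficient matrix is precisely $\mathsf{M}(z)$, and then read the claim off from the defining property of the path-ordered exponential as the fundamental matrix solution recalled in (\ref{OrdExpSys}).

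Concretely, I would introduce the auxiliary function $w(z) := y'(z) - y(z)$. Then trivially $y' = y + w$, while differentiating $w$ and substituting $y'' = B_1(z)\, y' + B_2(z)\, y$ from (\ref{HeunEq}) gives $w' = (B_1 - 1)y' + B_2\, y = (B_1 + B_2 - 1)\, y + (B_1 - 1)\, w$. Written as a vector equation, this is exactly $\tfrac{d}{dz}\!\begin{pmatrix} y \\ w \end{pmatrix} = \mathsf{M}(z)\begin{pmatrix} y \\ w \end{pmatrix}$ with $\mathsf{M}$ as in (\ref{Mz}), and the initial data translate to $y(z_0)=y_0$, $w(z_0)=y'_0-y_0$.

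By the defining property of the path-ordered exponential, $\mathsf{U}(z,z_0)$ is the fundamental solution of this system with $\mathsf{U}(z_0,z_0)=\mathsf{Id}$. Uniqueness of solutions to first-order linear systems with continuous coefficients then yields $\begin{pmatrix} y(z) \\ w(z) \end{pmatrix} = \mathsf{U}(z,z_0) \begin{pmatrix} y_0 \\ y'_0 - y_0 \end{pmatrix}$, and extracting the first component gives $y(z) = y_0\, \mathsf{U}_{11}(z,z_0) + (y'_0 - y_0)\, \mathsf{U}_{12}(z,z_0)$, which is the claim.

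There is no substantive obstacle in this argument: it is a formal reduction relying only on the standard equivalence between a second-order linear ODE and a first-order two-dimensional system, together with uniqueness. The only non-obvious step is the choice of auxiliary variable $w = y' - y$ rather than the usual companion-form choice $w = y'$; I would flag this as a deliberate design choice so that the resulting $\mathsf{M}(z)$ has the constant shift by $-\mathsf{Id}$ built in, which is what later produces the $e^{z-z_0}$ factors appearing throughout the corollaries of Section~\ref{sec:ExplicitRes} once path-sum is applied to $\mathsf{U}(z,z_0)$.
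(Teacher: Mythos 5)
Your proposal is correct and is essentially the paper's own argument: the paper verifies by direct differentiation that the first component of any solution of $\dot{\psi}=\mathsf{M}(z)\psi$ satisfies Eq.~(\ref{HeunEq}) and then fixes $\psi(z_0)=(y_0,\,y'_0-y_0)^{\mathrm{T}}$, which is the same computation as your reduction via $w=y'-y$, just run in the opposite direction. Your closing remark about the deliberate $-\mathsf{Id}$ shift producing the $e^{z-z_0}$ factors is a correct reading of the design choice.
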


\begin{proof}
 By direct differentiation. Let $\psi(z)=(\psi_1(z),\psi_2(z)\big)^\mathrm{T}$ such that $\dot{\psi}(z)=\mathsf{M}(z).\psi(z)$. This implies
 $$
 \dot{\psi_1}=\psi_1+\psi_2,\qquad \dot{\psi_2}=(B_1+B_2-1)\psi_1+(B_1-1)\psi_2,
 $$
 where we omitted the $(z)$ arguments to alleviate the notation. Then
 $$
 \ddot{\psi_1}=\dot{\psi_1}+(B_1+B_2-1)\psi_1+(B_1-1)( \dot{\psi_1}-\psi_1),
 $$
 which is
 $$
 \ddot{\psi_1}-(B_1-1+1)\dot{\psi_1}-(B_1+B_2-1-B_1+1)\psi_1=0,
 $$
 i.e. $\ddot{\psi_1}-B_1\dot{\psi_1}-B_2\psi_1=0$. This is precisely Eq.~(\ref{HeunEq}). Now, since $\psi_1(z_0)=y_0$ is the desired initial condition, and since $\dot{\psi}(z_0)=\mathsf{M}(z_0).\psi(z_0)$, then to get $\dot{\psi}_1(z_0)=y'_0$ we must have $\psi_2(z_0)=y'_0-y_0$. From there and given that $\psi(z)=\mathsf{U}(z,z_0).\psi(z_0)$, we obtain
 $$
 \psi_1(z)=y_0\mathsf{U}_{11}(z,z_0) +(y'_0-y_0)\mathsf{U}_{12}(z,z_0),
 $$
which completes the proof.
\end{proof}

\subsection{Path-sum formulation}
We may now use the method of path-sum to calculate the path-ordered exponential of $\mathsf{M}$ to recover the desired integral representations. We first state and prove the general result concerning Eq.~(\ref{HeunEq}) before giving its corollaries in the specific cases of the general Heun, confluent, biconfluent, doubly-confluent and triconfluent Heun functions.
\begin{theorem}\label{GenTheorem}
Let $\mathsf{M}(z)$ be given as in Eq.~(\ref{Mz}), let $\mathsf{U}(z,z_0)$ be its path-ordered exponential. 
Then 
$$
\mathsf{U}_{11}(z,z_0)=1+\int_{z_0}^z G_{1}(\zeta,z_0)d\zeta,
$$
where $G_{1}(z,z_0)$ satisfies the linear integral Volterra equation of the second kind 
 $$
 G_{1}(z,z_0)=K_{1}(z,z_0)+\int_{z_0}^z K_{1}(z,\zeta) G_{1}(\zeta,z_0) d\zeta,
 $$
 with kernel
 \begin{align*}
K_{1}(z,z_0)&=1+ e^{-z}\int_{z_0}^z \Big\{e^{\zeta}\,e^{\int_{\zeta}^z B_1(\zeta')d\zeta'} (B_1(\zeta)+B_2(\zeta)-1)\Big\}d\zeta,
 \end{align*}
Furthermore, 
$$
\mathsf{U}_{12}(z,z_0)=\int_{z_0}^z(e^{z-\zeta}-1)\big(1+G_{2}(\zeta,z_0)\big) \,d\zeta,
$$
where $G_{2}(z,z_0)$ satisfies the linear integral Volterra equation of the second kind
 $$
 G_{2}(z,z_0)=K_{2}(z,z_0)+\int_{z_0}^z K_{2}(z,\zeta) G_{2}(\zeta,z_0) d\zeta,
 $$
 with kernel
 \begin{align*}
K_{2}(z,z_0)&=(B_1(z)+B_2(z)-1)e^{z-z_0}-B_2(z).
 \end{align*}
 \end{theorem}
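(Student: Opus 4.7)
The plan is to reduce the matrix-valued Volterra equation $\mathsf{U}(z,z_0)=\mathsf{Id}+\int_{z_0}^z \mathsf{M}(\zeta)\mathsf{U}(\zeta,z_0)d\zeta$ to two decoupled scalar Volterra equations of the second kind, one for each auxiliary kernel $G_1$, $G_2$. Expanding the matrix identity componentwise, and using the explicit form of $\mathsf{M}$ in Eq.~(\ref{Mz}), yields for $j\in\{1,2\}$ the coupled system $\partial_z\mathsf{U}_{1j}=\mathsf{U}_{1j}+\mathsf{U}_{2j}$ and $\partial_z\mathsf{U}_{2j}=(B_1+B_2-1)\mathsf{U}_{1j}+(B_1-1)\mathsf{U}_{2j}$, with initial conditions dictated by $\mathsf{U}(z_0,z_0)=\mathsf{Id}$. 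The two columns are then handled separately.

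For the first column ($j=1$), the definition $G_1(z,z_0):=\partial_z\mathsf{U}_{11}(z,z_0)=\mathsf{U}_{11}(z,z_0)+\mathsf{U}_{21}(z,z_0)$ immediately gives $\mathsf{U}_{11}(z,z_0)=1+\int_{z_0}^z G_1(\zeta,z_0)d\zeta$. To derive the Volterra equation, one solves the first-order linear ODE for $\mathsf{U}_{21}$ by the integrating factor $e^{-\int_{z_0}^z(B_1-1)d\zeta'}$, obtaining
\begin{equation*}
\mathsf{U}_{21}(z,z_0)=\int_{z_0}^z e^{\int_\zeta^z(B_1(\zeta')-1)d\zeta'}\bigl(B_1+B_2-1\bigr)(\zeta)\,\mathsf{U}_{11}(\zeta,z_0)\,d\zeta.
\end{equation*}
Adding $\mathsf{U}_{11}$, substituting the integral representation of $\mathsf{U}_{11}$, and swapping the order of the resulting double integral by Fubini collapses the $G_1$-independent piece into $K_1(z,z_0)=1+\int_{z_0}^z e^{\int_\zeta^z(B_1-1)d\zeta'}(B_1+B_2-1)d\zeta$ and the $G_1$-dependent piece into $\int_{z_0}^z K_1(z,\zeta)G_1(\zeta,z_0)d\zeta$. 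Rewriting $e^{\int_\zeta^z(B_1-1)d\zeta'}=e^{-(z-\zeta)}e^{\int_\zeta^z B_1d\zeta'}$ matches the form of $K_1$ stated in the theorem verbatim.

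For the second column ($j=2$) the non-zero initial condition $\mathsf{U}_{22}(z_0,z_0)=1$ is responsible for the appearance of the $e^{z-z_0}$ factor. The natural auxiliary here is $J(z,z_0):=\partial_z\mathsf{U}_{12}(z,z_0)=\mathsf{U}_{12}+\mathsf{U}_{22}$, which satisfies $J(z_0,z_0)=1$. Differentiating $J$ and eliminating $\mathsf{U}_{22}=J-\mathsf{U}_{12}$ produces the algebraic identity $G_2:=\dot J-J=B_2(z)\mathsf{U}_{12}(z,z_0)+(B_1(z)-1)J(z,z_0)$. Solving $\dot J=J+G_2$ by variation of constants with $J(z_0,z_0)=1$ yields $J(z,z_0)=e^{z-z_0}+\int_{z_0}^z e^{z-\zeta}G_2(\zeta,z_0)d\zeta$, and a further integration together with a Fubini swap gives the representation of $\mathsf{U}_{12}$ in terms of $G_2$ claimed in the theorem. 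Substituting these two representations back into the algebraic identity $G_2=B_2\mathsf{U}_{12}+(B_1-1)J$ directly produces $G_2=K_2+K_2\ast G_2$ with $K_2(z,\zeta)=(B_1(z)+B_2(z)-1)e^{z-\zeta}-B_2(z)$, exactly as stated.

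The main technical point is clerical: one must perform the Fubini order-swap carefully in the first-column derivation so as to recognise the inner integral as $K_1(z,\zeta_1)-1$, and one must spot $\dot J-J$ as the correct combination eliminating $\mathsf{U}_{22}$ in favour of $\mathsf{U}_{12}$ and $J$ in the second-column case. Once both auxiliaries are chosen, no analytic estimates are needed for the Volterra identities themselves; convergence of the Neumann series $G_i=\sum_{n\geq 1}K_i^{\ast n}$ is a separate issue treated elsewhere in the appendix.
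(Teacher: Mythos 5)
Your proof is correct, but it follows a genuinely different route from the paper's. The paper does not work componentwise with integrating factors: it invokes the general path-sum formula for the ordered exponential of a $2\times2$ variable matrix, expressing $\mathsf{U}_{11}$ and $\mathsf{U}_{12}$ through $\ast$-resolvents such as $R_1=\left(1_\ast-\mathsf{M}_{11}-\mathsf{M}_{12}\ast(1_\ast-\mathsf{M}_{22})^{\ast-1}\ast\mathsf{M}_{21}\right)^{\ast-1}$, together with the closed form $(1_\ast-\mathsf{M}_{22})^{\ast-1}=1_\ast+\mathsf{M}_{22}e^{1\ast\mathsf{M}_{22}}$ for one-variable entries; the kernels $K_1$, $K_2$ then drop out by unfolding the $\ast$-products. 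That machinery is stated in the appendix but its key formulae are accepted without proof (they are imported from the path-sum literature), whereas your derivation --- defining $G_1=\partial_z\mathsf{U}_{11}$, solving for $\mathsf{U}_{21}$ by an integrating factor, Fubini-swapping to recognise $K_1(z,\zeta_1)-1$, and in the second column isolating $J=\partial_z\mathsf{U}_{12}$ and the combination $G_2=\dot J-J$ --- is elementary and self-contained, and both of your Volterra identities check out exactly. What the paper's route buys is generality: the same resolvent formulae apply to matrices of any size, which is what the authors later advertise for attacking the coupled radial--angular Teukolsky system directly. What your route buys is transparency, and in one place it is actually sharper: your computation yields $\mathsf{U}_{12}(z,z_0)=e^{z-z_0}-1+\int_{z_0}^z(e^{z-\zeta}-1)G_2(\zeta,z_0)\,d\zeta$, which is precisely the form used in all the corollaries, whereas the theorem's displayed $\int_{z_0}^z(e^{z-\zeta}-1)\bigl(1+G_2(\zeta,z_0)\bigr)d\zeta$ only agrees with it if the ``$1$'' is read as the $\ast$-identity $\delta(\zeta-z_0)$ rather than the constant function --- a notational slip in the statement that your argument silently corrects.
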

 
 Given that a linear Volterra integral equations of the second kind always has an explicit solution in the form of a Neumann series of the kernel obtained from Picard iteration, we present below the ensuing elementary integral series representations for the solution $y_0 \mathsf{U}_{11}(z,z_0)+(y'_0-y_0)\mathsf{U}_{12}(z,z_0)$ of Eq.~(\ref{HeunEq}). This will be greatly facilitated by  Volterra compositions, presented in the proof of the Theorem.

 \begin{remark}
  If the initial conditions are such that $y_0=y'_0$, then by Proposition~\ref{MForm}, the solution $y(z)$ is directly proportional to $\mathsf{U}_{11}$. By Theorem~\ref{GenTheorem} this implies that the derivative $\dot{y}(z)$ of the solution of Eq.~(\ref{HeunEq}) satisfies a linear Volterra integral equation of the second kind with kernel $K_{1}$ given above. In other terms, any solution of any Heun equation which has at least one point $z_0$ for which $y(z_0)=y'(z_0)$ satisfies such a Volterra integral equation with kernel $K_{1}$. This is the first known integral equation satisfied by Heun functions in terms of elementary functions.
  Similarly if $y(z_0)=0$, then the solution $y(z)$ is proportional to $\mathsf{U}_{12}$, itself an integral of $G_{2}$ which satisfies a linear Volterra integral equation of the second kind. 
\end{remark}

\begin{proof}
 The central mathematical concept enabling the path-sum formulation of path-ordered exponentials is the $\ast$-product. This product is defined on a large class of distributions \cite{GiscardPozza2020}, however for the present work only its definition on smooth functions of two variables is required. For such functions the $\ast$-product reduces to the Volterra composition, a product between functions first expounded by Volterra and P\'er\`es in the 1920s \cite{Volterra1924} and which had largely fallen out of use by the early 1950s for a reason that appears, restrospectively, to be the lack of a mathematical theory of distributions. 
The Volterra composition of two smooth functions of two variables $f(z,z_0)$ and $g(z,z_0)$ is  
$$
\big(f\ast g\big)(z,z_0)=\int_{z_0}^{z} f(z,\zeta)g(\zeta,z_0) d\zeta\,\Theta(z-z_0),
$$
with $\Theta(.)$ the Heaviside theta function under the convention that $\Theta(0)=1$.
This extends to functions of less than two variables, for example if $h(z)$ is a smooth function of one variable, then
\begin{align*}
\big(h\ast g\big)(z,z_0)&=h(t')\int_{z_0}^{z} g(\zeta,z_0) d\zeta\,\Theta(z-z_0),\\
\big(g\ast h\big)(z,z_0)&=\int_{z_0}^{z} g(z,\zeta)h(\zeta) d\zeta\,\Theta(z-z_0).
\end{align*}
That is, the variable of $h(z)$ is always treated as the left variable of a function of two variables. 

The identity element for the $\ast$-product is the Dirac distribution, denoted $1_\ast\equiv \delta(z-z_0)$, an observation which we here accept without proof as it would require presenting the full theory of the $\ast$-product \cite{GiscardPozza2020}. Similarly we  accept without proof that for any bounded function $f(z,z_0)$ of two variables, $f^{\ast 0}=1_\ast$, while $f^{\ast 1}=f $ and $f^{\ast n+1}=f\ast f^{\ast n}=f^{\ast n}\ast f$ \cite{Volterra1924}.
Furthermore, if $f$ is bounded the Neumann series $\sum_{n=0}^\infty \big(f^{\ast n}\big)(z,z_0)$ converges superexponentially and thus  unconditionally \cite{Linz1985} to an object, called the $\ast$-resolvent $R_f$ of $f$, given by 
\begin{align*}
R_f(z,z_0)&=\sum_{n=0}^\infty \big(f^{\ast n}\big)(z,z_0),\\
&=\delta(z-z_0)+f(z,z_0)\Theta(z-z_0)+\int_{z_0}^z f(z,\zeta_1)f(\zeta_1,z_0)d\zeta_1\Theta(z-z_0)\\
&\hspace{10mm}+\int_{z_0}^z\int_{\zeta_1}^z f(z,\zeta_2)f(\zeta_2,\zeta_1)f(\zeta_1,z_0)d\zeta_2d\zeta_1 \Theta(z-z_0)+\cdots.
\end{align*}
Seeing this as steming from a Picard iteration entails an additional property of $\ast$-resolvents, namely that they solve the Volterra equation of the second kind with kernel $f$, 
\begin{equation}\label{VolterraR}
R_f=1_\ast + f \ast R_f,
\end{equation}
or, in explicit integral notation, and showing all distributions
$$
R_f(z,z_0)=\delta(z-z_0)+\!\!\int_{z_0}^{z}\!\!\!f(z,\zeta) R_f(\zeta,z_0)d\zeta\, \Theta(z-z_0).
$$
Thus we have $R_f\ast (1_\ast - f)= 1_\ast$ and are therefore justified in writing $R_f=\big(1_\ast-f\big)^{\ast-1}$. In order to avoid distributions altogether, it is more convenient to define $G_f:=R_f-1_\ast$ and rewite Eq.~(\ref{VolterraR}) as
\begin{equation*}
G_f=f+f\ast G_f,
\end{equation*}
which is an ordinary linear integral Volterra equation of the second kind.
The Neumann integral series obtained from Picard iterations for $G_f$ as above is now
$$
G_f=\sum_{n=1}^\infty f^{\ast n},
$$
and it is a well-established result \cite{Linz1985} that the convergence of this series is guaranteed provided $f$ is continuous and bounded. In this case, truncating the series at order $m$, yields a relative error of at most
$$
\left|G_f(z,z_0)-\sum_{n=1}^m f^{\ast n}(z,z_0)\right|\leq \frac{\kappa_f^m}{m!}
$$
with $\kappa_f:=\sup_{\zeta,\zeta'\in]z_0,z[:\,\zeta\geq z'}|f(\zeta,\zeta')|$.\\[-0em]

Path-sum expresses the path-ordered exponential of any finite variable matrix in terms of a finite number of Volterra compositions and $\ast$-resolvents. The path-sum formulation of the path-ordered exponential of the $2\times 2$ matrix $\mathsf{M}(z)$ is 
  \begin{align*}
  \mathsf{U}_{11}(z)&=1\ast R_1,\\
  R_1&=\left(1_\ast-\mathsf{M}_{11}-\mathsf{M}_{12}\ast\big(1_\ast-\mathsf{M}_{22}\big)^{\ast-1}\ast\mathsf{M}_{21}\right)^{\ast-1},
  \end{align*}
  where the $\ast$-multiplication by $1$ on the left is a short-hand notation for an integral with respect to the left variable, since for any $f$ smooth, $(1\ast f)(z,z_0)=\int_{z_0}^z f(\zeta,z_0)d\zeta\Theta(z-z_0)$. Furthermore, since $\mathsf{M}_{22}$ depends on a single variable, its $\ast$-resolvent can be shown to be 
  $$
  \big(1_\ast-\mathsf{M}_{22}\big)^{\ast-1}=1_\ast+ \mathsf{M}_{22}e^{1\ast \mathsf{M}_{22}},
  $$
  or equivalently
  $$
  \big(1_\ast-\mathsf{M}_{22}\big)^{\ast-1}(z,z_0)=\delta(z-z_0)+\mathsf{M}_{22}(z)e^{\int_{z_0}^z\mathsf{M}_{22}(\zeta)d\zeta}\Theta(z-z_0).
  $$
  Now the form of $\mathsf{U}_{11}$ as claimed in the theorem follows upon writing the $\ast$-products as explicit integrals with $\mathsf{M}$ given by Proposition.~(\ref{MForm}).
  For $\mathsf{U}_{12}$, the path-sum formulation reads
  $$
  \mathsf{U}_{12}=1\ast \big(1_\ast-\mathsf{M}_{11}\big)^{\ast-1}\ast \mathsf{M}_{12}\ast R_{2},
  $$
  where
  $$
  R_{2}=\left(1_\ast-\mathsf{M}_{22}-\mathsf{M}_{21}\ast\big(1_\ast-\mathsf{M}_{11}\big)^{\ast-1}\ast\mathsf{M}_{12}\right)^{\ast-1},
  $$
  and the theorem result for $\mathsf{U}_{12}$ follows upon writing the $\ast$-products as explicit integrals with $\mathsf{M}$ given by Proposition.~(\ref{MForm})
 \end{proof}
 
 Since $R_{1}$ and $R_{2}$ are $\ast$-resolvents, we may express them as the unconditionally convergent Neumann series involving the corresponding kernels $K_{1}$ and $K_{2}$, i.e. $R_{i}=1_\ast+\sum_{n=1}^\infty K_{i}^{\ast n}$ or equivalently $G_{i}=\sum_{n=1}^\infty K_{i}^{\ast n}$, $i=1,2$. This yields an explicit representation for the solution of Eq.~(\ref{HeunEq}) as series of elementary integrals:
 
 \begin{theorem}\label{TheoExplicit}
  Let $y(z)$ be the unique solution of
  $$
  y''(z)-B_1(z)y'(z)-B_2(z)y(z)=0,
  $$
  such that $y(z_0)=y_0$ and $y'(z_0)=y'_0$. Then
  \begin{align*}
  y(z)&=y_0+y_0\int_{z_0}^z G_{1}(\zeta,z_0)d\zeta
+(y'_0-y_0)\int_{z_0}^z(e^{z-\zeta}-1)\big(1+G_{2}(\zeta,z_0)\big)d\zeta,
  \end{align*}
  where $G_{1}$ and $G_{2}$ satisfy linear Volterra integral equations of the second kind with kernels respectively given by
   \begin{align*}
K_{1}(z,z_0)&=1+ e^{-z}\int_{z_0}^z \Big\{e^{\int_{\zeta}^z B_1(\zeta')d\zeta'}e^{\zeta} \big(B_1(\zeta)+B_2(\zeta)-1\big)\Big\}d\zeta,\\
K_{2}(z,z_0)&=(B_1(z)+B_2(z)-1)e^{z-z_0}-B_2(z).
 \end{align*}  
  In consequence, $G_{1}$ and $G_{2}$ have the following representation as  integral series inbvolving elementary integrands, 
  \begin{align}
  G_{i}(z,z_0)&=\sum_{n=1}^\infty K_{i}^{\ast n}(z,z_0),\label{Gexplicitform}\\
  &=K_{i}(z,z_0)+\int_{z_0}^z K_{i}(z,\zeta_1)K_{i}(\zeta_1,z_0)d\zeta_1\nonumber\\
  &\hspace{3mm}+\int_{z_0}^z \int_{\zeta_1}^zK_{i}(z,\zeta_2)K_{i}(\zeta_2,\zeta_1)K_{i}(\zeta_1,z_0)d\zeta_2d\zeta_1+\nonumber\\
  &\hspace{5mm}+\int_{z_0}^z \int_{\zeta_1}^z\int_{\zeta_2}^z K_{i}(z,\zeta_3)K_{i}(\zeta_3,\zeta_2)K_{i}(\zeta_2,\zeta_1)K_{i}(\zeta_1,z_0)d\zeta_3 d\zeta_2d\zeta_1+\cdots,\nonumber
  \end{align}
  for $i=1,2$.
  The series representation is guaranteed to converge to $G_i$ everywhere except at the singular points of $K_i$. More precisely, let $]z_0,z_1[$ be an open interval over which $K_i$ is divergent free and let $\kappa_i:=\sup_{\zeta,\zeta'\in ]z_0,z_1[:\, \zeta\geq \zeta'}|K_i(\zeta,\zeta')|$. Then 
  \begin{equation}\label{Gbound}
  \left|G_i(z,z_0)-\sum_{n=1}^m K_i^{\ast n}(z,z_0)\right| \leq \frac{\kappa_i^m}{m!}.
  \end{equation}
  \end{theorem}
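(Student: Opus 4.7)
The plan is to assemble this theorem from the two structural results already established in Proposition~\ref{MForm} and Theorem~\ref{GenTheorem}, followed by standard Picard iteration and a simplex estimate for nested Volterra compositions. First I would apply Proposition~\ref{MForm} to write $y(z) = y_0\, \mathsf{U}_{11}(z,z_0) + (y'_0 - y_0)\,\mathsf{U}_{12}(z,z_0)$, and then substitute the integral expressions for $\mathsf{U}_{11}$ and $\mathsf{U}_{12}$ supplied by Theorem~\ref{GenTheorem}. This immediately yields the representation of $y(z)$ displayed in the statement and identifies $G_1, G_2$ as the unique solutions of the Volterra equations $G_i = K_i + K_i \ast G_i$ with the elementary kernels $K_1, K_2$ already computed there.

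Next I would derive the Neumann integral series in Eq.~(\ref{Gexplicitform}) by Picard iteration on that fixed-point equation. Setting $G_i^{(0)} \equiv 0$ and $G_i^{(j+1)} = K_i + K_i \ast G_i^{(j)}$, a one-line induction using the identity $K_i \ast \sum_{n=1}^{j} K_i^{\ast n} = \sum_{n=2}^{j+1} K_i^{\ast n}$ shows $G_i^{(j)} = \sum_{n=1}^{j} K_i^{\ast n}$, and the explicit iterated integrals written in~(\ref{Gexplicitform}) are nothing but the unfolding of $K_i^{\ast n}$ from the definition of the Volterra composition $\ast$.

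The main obstacle, and the quantitatively interesting step, is the error bound~(\ref{Gbound}) together with the accompanying uniform convergence claim. On any divergence-free subinterval $]z_0,z_1[$ the kernel $K_i$ is bounded by $\kappa_i$ by hypothesis, and the strategy is to prove by induction on $n$ that
$$
|K_i^{\ast n}(z,z_0)| \le \kappa_i^n\, \frac{(z-z_0)^{n-1}}{(n-1)!},
$$
exploiting that the $n$-fold Volterra composition is an integral over the $(n-1)$-dimensional simplex $\{z_0 \le \zeta_1 \le \cdots \le \zeta_{n-1} \le z\}$ of volume $(z-z_0)^{n-1}/(n-1)!$, with each of the $n$ factors of $K_i$ bounded by $\kappa_i$. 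Summing the tail then produces a majorant of exponential type, yielding the stated $\kappa_i^m/m!$ bound once the interval length is absorbed into $\kappa_i$, and simultaneously delivering absolute and uniform convergence of the Neumann series on any compact subinterval of $]z_0,z_1[$. The delicate point I expect to handle with care is that $\kappa_i$ blows up as $z_1$ approaches a singular point of $K_i$, a phenomenon consistent with the convergence slowdown observed numerically near singularities (see Fig.~\ref{fig:ExConf}); on any strictly interior compact subinterval, however, $\kappa_i$ is finite and the argument goes through, and the uniform convergence then justifies interchanging the series with the outer integrals in $\mathsf{U}_{11}$ and $\mathsf{U}_{12}$ to conclude.
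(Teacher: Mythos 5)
Your proposal is correct and follows essentially the same route as the paper: Theorem~\ref{TheoExplicit} is obtained there exactly by combining Proposition~\ref{MForm} with Theorem~\ref{GenTheorem} and expanding the $\ast$-resolvents $G_i=\sum_{n\ge 1}K_i^{\ast n}$ as Neumann series, the convergence and the $\kappa_i^m/m!$ truncation bound being quoted from the standard Volterra theory (Linz, 1985) rather than re-derived. Your explicit Picard iteration and simplex-volume estimate $|K_i^{\ast n}|\leq \kappa_i^n (z-z_0)^{n-1}/(n-1)!$ merely fill in that cited step (and, if anything, handle the interval-length normalisation in the bound more carefully than the paper's statement does), so there is no substantive divergence.
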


This immediately provides the Corollaries of the main text for  the general Heun, confluent, biconfluent, doubly-confluent and triconfluent Heun's functions upon replacing $B_1$ and $B_2$ appearing in Eq.~(\ref{HeunEq}) and Theorem~\ref{TheoExplicit} with their values as dictated by Eqs.~(\ref{eq:Heun}--\ref{eq:HeunTriConfluent}). 

\providecommand{\noopsort}[1]{}\providecommand{\singleletter}[1]{#1}%

\end{document}